\documentclass[acmsmall,nonacm]{acmart}
\settopmatter{printfolios=true}

\usepackage{dsfont}
\usepackage{bm}
\usepackage{graphicx}
\usepackage{pgf}
\usepackage{caption}
\usepackage{hyperref}
\usepackage[normalem]{ulem}
\usepackage{booktabs}
\AtBeginDocument{%
  \providecommand\BibTeX{{%
    \normalfont B\kern-0.5em{\scshape i\kern-0.25em b}\kern-0.8em\TeX}}}



\begin{CCSXML}
<ccs2012>
<concept>
<concept_id>10002950.10003648.10003700.10003701</concept_id>
<concept_desc>Mathematics of computing~Markov processes</concept_desc>
<concept_significance>500</concept_significance>
</concept>
<concept>
<concept_id>10003033.10003079.10011672</concept_id>
<concept_desc>Networks~Network performance analysis</concept_desc>
<concept_significance>300</concept_significance>
</concept>
</ccs2012>
\end{CCSXML}

\ccsdesc[500]{Mathematics of computing~Markov processes}
\ccsdesc[300]{Networks~Network performance analysis}

\newtheorem{claim}{Claim}
\newtheorem*{claim*}{Claim}

\theoremstyle{definition}
\newtheorem{remark}{Remark}
\usepackage{natbib}
\usepackage{enumitem}
\allowdisplaybreaks
\newcommand{\Exp}{\mbox{Exp}}

\renewcommand{\Pr}{\mathbb{P}} 
\newcommand{\expect}{\mathbb{E}} 
\newcommand{\pup}{p^{(\numserver)}} 
\newcommand{\PrQup}{P^{(\numserver)}_Q} 
\newcommand{\numserver}{n} 
\newcommand{\numtype}{K} 
\newcommand{\resource}{m} 
\newcommand{\resourceup}{\resource^{(\numserver)}}
\newcommand{\resourcemax}{\resourceup_{\max}}
\newcommand{\type}{i} 
\newcommand{\lambdasup}{\lambda^{(\numserver)}}
\newcommand{\load}{\rho}
\newcommand{\loadup}{\load^{(\numserver)}}
\newcommand{\numjob}{X} 
\newcommand{\numjobup}{\numjob^{(\numserver)}} 
\newcommand{\numjobvec}{\bm{\numjob}} 
\newcommand{\numjobvecup}{\numjobvec^{(\numserver)}} 
\newcommand{\scalenumjob}{\widetilde{X}} 
\newcommand{\scalenumjobvec}{\widetilde{\bm{X}}} 
\newcommand{\scalenumjobup}{\scalenumjob^{(\numserver)}} 
\newcommand{\scalenumjobvecup}{\scalenumjobvec^{(\numserver)}} 
\newcommand{\numjoblim}{y} 
\newcommand{\numjoblimvec}{\bm{\numjoblim}} 
\newcommand{\numjoblimup}{y^{(\numserver)}} 
\newcommand{\Qup}{Q^{(\numserver)}} 
\newcommand{\Qvecup}{\bm{Q}^{(\numserver)}} 

\newcommand{\stateup}{\bm{U}^{(\numserver)}} 
\newcommand{\numjoba}{Y} 
\newcommand{\numjobaup}{\numjoba^{(\numserver)}}
\newcommand{\numjobaupscaled}{\widetilde{\numjoba}^{(\numserver)}}
\newcommand{\numjobavec}{\bm{\numjoba}}
\newcommand{\numjobavecup}{\numjobavec^{(\numserver)}}


%
%
%
%
%
%
\global\long\def\mU{\mathcal{U}}%

\title{Zero Queueing for Multi-Server Jobs}
\date{}

\begin{document}

\author{Weina Wang}
\email{weinaw@cs.cmu.edu}
\affiliation{%
  \institution{Carnegie Mellon University}
  \streetaddress{Computer Science Department, 5000 Forbes Ave}
  \city{Pittsburgh}
  \state{PA}
  \country{USA}
  \postcode{15213}
}

\author{Qiaomin Xie}
\email{qiaomin.xie@cornell.edu}
\affiliation{%
  \institution{Cornell University}
  \streetaddress{School of Operations Research and Information Engineering, 237 Frank H.T.\ Rhodes Hall}
  \city{Ithaca}
  \state{NY}
  \country{USA}
  \postcode{14853}
}

\author{Mor Harchol-Balter}
\email{harchol@cs.cmu.edu}
\affiliation{%
  \institution{Carnegie Mellon University}
  \streetaddress{Computer Science Department, 5000 Forbes Ave}
  \city{Pittsburgh}
  \state{PA}
  \country{USA}
  \postcode{15213}
}

\renewcommand{\shortauthors}{Weina Wang, Qiaomin Xie, \& Mor Harchol-Balter}

\begin{abstract}
Cloud computing today is dominated by \emph{multi-server jobs.}  These are jobs that request multiple servers \emph{simultaneously} and hold onto all of these servers for the duration of the job.   Multi-server jobs add a lot of complexity to the traditional one-server-per-job model: an arrival might not ``fit'' into the available servers and might have to queue, blocking later arrivals and leaving servers idle. From a queueing perspective, almost nothing is understood about multi-server job queueing systems; even understanding the exact stability region is a very hard problem.

In this paper, we investigate a multi-server job queueing model under scaling regimes where the number of servers in the system grows.  Specifically, we consider a system with multiple classes of jobs, where jobs from different classes can request different numbers of servers and have different service time distributions, and jobs are served in first-come-first-served order.  The multi-server job model opens up new scaling regimes where \emph{both} the number of servers that a job needs and the system load scale with the total number of servers. Within these scaling regimes, we derive the first results on stability, queueing probability, and the transient analysis of the number of jobs in the system for each class. In particular we derive sufficient conditions for zero queueing. Our analysis introduces a novel way of extracting information from the Lyapunov drift, which can be applicable to a broader scope of problems in queueing systems.
\end{abstract}
\maketitle

\section{Introduction}\label{sec:introduction}

Queueing theorists have long been interested in characterizing the probability that an arriving job will have to queue, i.e., the \emph{queueing probability}.
This question has a rich history where it has been investigated under different scaling regimes.  Consider the classical M/M/$\numserver$ system that has load $1-\beta \numserver^{-\alpha}$ with $0<\beta< 1$ and $\alpha\ge 0$.  The case of $\alpha=1/2$ is known as the Halfin-Whitt regime \cite{HalWhi_81} and serves as the critical threshold that separates diminishing queueing probability from non-diminishing queueing probability, as the number of servers $\numserver$ grows. 
In particular, it is known that the queueing probability is diminishing when $0\le \alpha < 1/2$ (sub-Halfin-Whitt); is strictly between $0$ and $1$ when $\alpha=1/2$ (Halfin-Whitt); and goes to $1$ when $\alpha>1/2$ (see, e.g., \cite{BraDaiFen_17}).

Today's computing clusters are more general than the M/M/$\numserver$ model used in the past. In particular, the jobs typically request {\em multiple servers simultaneously} and hold onto them for the duration of the job.   This difference is largely a result of machine learning jobs like TensorFlow
\cite{AbaBarChe_16,lin2018model}, which are highly parallel.  For example, when we look at 
Google's Borg Scheduler \cite{verma2015large}, we see that the number of servers occupied by an individual job can be anywhere from $1$ to $100000$, as illustrated in Figure~\ref{fig:multipleGoogle}~\cite{Wilkes19, Eurosys20}.  We refer to jobs that occupy multiple servers/cores as {\em multi-server jobs}.  While multi-server jobs have always existed in the niche supercomputing world, they have now become mainstream.   

\begin{figure}[h]
    \begin{center}
    \includegraphics[scale=0.3]{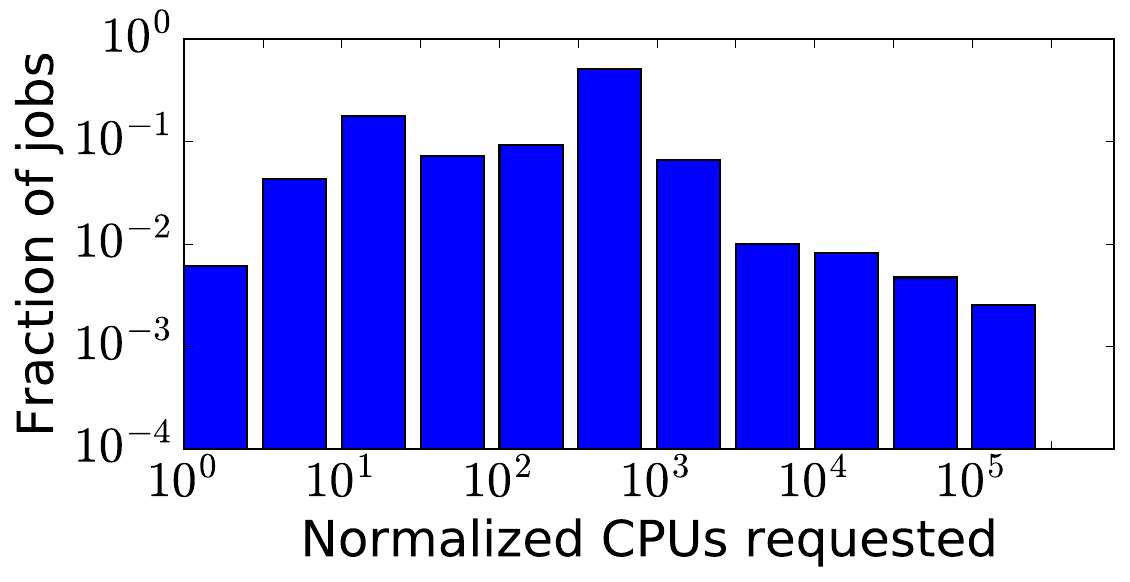}
    \end{center}
    \caption{Distribution of the number of CPU cores requested by individual Google jobs according to Google's recently published Borg Trace \cite{Eurosys20, Wilkes19} as taken from \cite{StabilityMultiserver20}.  For privacy reasons, Google publishes only normalized numbers, however it is still clear that the range spans $5$ orders of magnitude across different jobs.}
    \label{fig:multipleGoogle}
\end{figure}

The advent of multi-server jobs requires us to generalize our queueing models.
Figure~\ref{fig:multiserverMGk} shows an illustration of what we will refer to as {\em the multi-server job queueing model}.  Here there are a total of $\numserver$ servers.  Jobs arrive with average rate $\lambda$ and are served in First-Come-First-Serve (FCFS) order~\footnote{Importantly, we note that we are assuming that jobs arrive to a centralized queue and are served in FCFS order, rather than trying to ``pack" jobs into servers.  A centralized FCFS scheduler is the default scheduler used in the cloud-computing industry when running multi-server jobs \cite{verma2015large,Eurosys20}. Even when there are multiple priority classes of jobs, as in \cite{Eurosys20}, within each class the jobs are served in FCFS order.
}.
With probability $p_i$ an arrival is of class $i$.  Each arriving job of class $i$ requests $m_i$ servers and holds onto these servers for time distributed according to random variable $S_i$. We will refer to the number of servers a job demands as the \emph{server need} of the job, while we refer to the time that the job holds onto the servers as its {\em service time}.

\begin{figure}[hbtp]
\begin{center}
    \includegraphics[scale=0.47, clip, trim = 0 0 0 0]{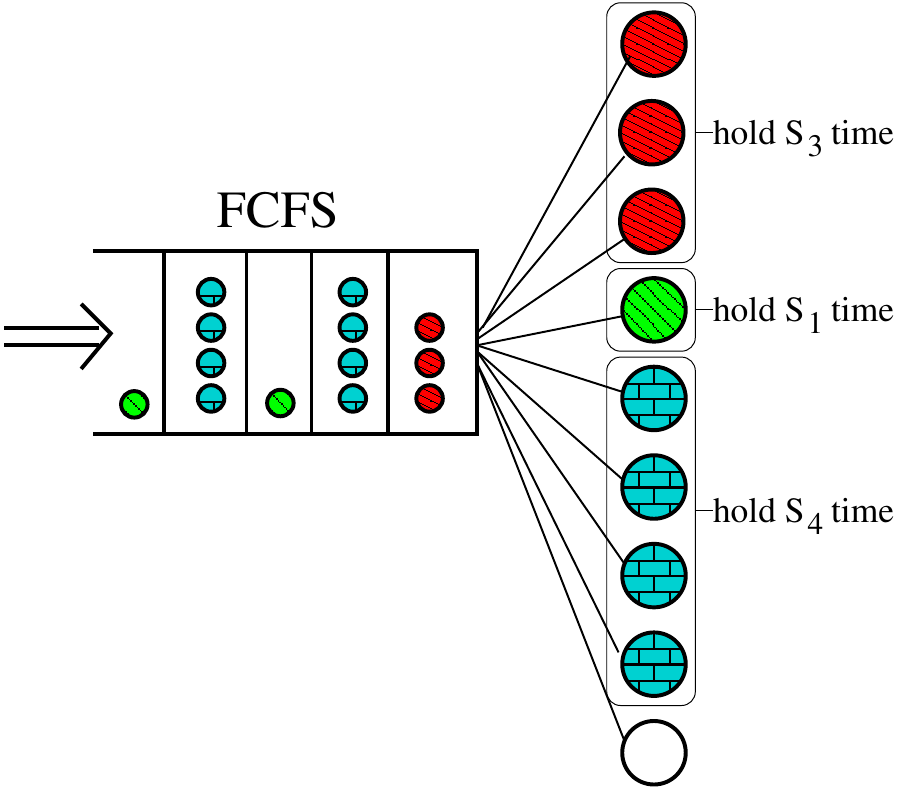}
    \end{center}
\caption{The multi-server job queueing model with $\numserver =9$ servers.  An arriving job of class $i$ requests $m_i$ servers and holds onto these servers for time distributed according to random variable $S_i$.  In this particular illustration, $m_i = i$. Note that the job at the head of the queue cannot enter service because it does not fit; hence one server is idle.}
\label{fig:multiserverMGk}
\end{figure}

The multi-server job queueing model opens up new scaling regimes where both the server needs of jobs and the system load scale with the number of servers.  In this paper we ask:
\begin{quote}
    {\em Under this new joint scaling regime, when is diminishing queueing probability achievable?}
\end{quote}
The property of \emph{diminishing queueing probability} is also referred to as (asymptotically) \emph{zero queueing} in the literature, and we will use these two terms interchangeably.
The answer to this question can take a very different form from the classical results.
To see this, let us consider the following simple example.  Suppose jobs arrive to a system with $\numserver$ servers according to a Poisson process of rate $\lambda$ where all jobs are of the {\em same} class 1.   Suppose that $S_1 \sim \Exp(\mu_1)$ and $\resource_1 =\numserver/2$ servers.
Even in this degenerate version of the multi-server job model, the work associated with a job depends both on its service time, $S_1$, and also on its server need, $\resource_1$.  Therefore, we redefine the notion of load to be $\rho\triangleq\frac{\lambda \resource_1 \expect[S_i]}{\numserver}=\frac{\lambda}{2\mu_1}$.  Then even when the load $\load$ is a positive constant smaller than~$1$, analogous to the classical subcritical regime in a large system \cite{Igl_73}, the queueing probability does not diminish when $\numserver$ grows.  The reason is that the large server need of jobs makes the system equivalent to an M/M/$2$ system, which is effectively a small system even though the actual number of servers grows.

Understanding the queueing probability becomes much more challenging when there are multiple classes of jobs that have heterogeneous server needs and service time distributions. Here it is often the case that some servers remain idle, because the job at the head of the queue cannot ``fit'' into the remaining unused servers, as illustrated in  Figure~\ref{fig:multiserverMGk}. Although there may exist jobs in the queue with smaller server needs, those jobs are blocked by the head-of-queue job.  
Due to this \emph{head-of-queue blocking}, the process that tracks the number of each class of jobs in service is highly complicated. Almost nothing is known on the performance of the
multi-server job queueing model. Attempts to derive the
steady-state distribution have assumed
highly simplified systems with only $\numserver = 2$ servers \cite{BrilGre_84, FilippopoulosKaratza07}, where solutions are already highly complex, involving roots to a quartic equation.
Even characterizing the stability region of the system is an open problem except for the special cases where all jobs have the same service rates  \cite{RumyantsevMorozov17, MorozovRumyantsev16, Afanasevaetal19}, or where there are only two job classes with different service rates \cite{StabilityMultiserver20}.

\subsection*{Our results}
We consider a system of $\numserver$ servers and $\numtype$ classes of jobs.  For system parameters that are functions of $\numserver$, we add a superscript $^{(\numserver)}$ to indicate the dependency unless otherwise specified. Jobs arrive according to a Poisson process with rate $\lambdasup$ and an arrival is of class $\type$ with probability $\pup_{\type}$. Then for each job class $\type$ with $1\le\type\le \numtype$, the arrivals form a Poisson process with rate $\lambdasup_\type:=\lambdasup p^{(n)}_\type.$ 
Jobs of class $\type$ have i.i.d.\ service times exponentially distributed with rate $\mu_{\type}$, which does not scale with $\numserver$.  Each class $\type$ job has a server need of $\resourceup_\type$ servers.  Let $\resourcemax=\max_{1\le\type\le\numtype}\resourceup_{\type}$ denote the maximum server need.

\subsubsection*{\textbf{Stability result}}
We first define a notion of \emph{load} for multi-server job queueing systems,
which will be used in the stability condition.
Traditionally, for single-server jobs, the \emph{work} brought in by a job is quantified by its service time, i.e., the time the job needs to occupy a server.  However, for multi-server jobs, we need to account for the fact that they occupy multiple servers simultaneously.  Therefore, we quantify the \emph{work} brought in by a multi-server job using the product ``server need $\times$ service time'', which is consistent with the CPU-hours metric used in practice \cite{Eurosys20}.  Then we define the load of class~$\type$ jobs to be $\loadup_{\type}=\frac{\lambdasup_{\type} \resourceup_{\type}}{\numserver\mu_{\type}}$, and the total system load to be $\loadup=\sum_{\type=1}^{\numtype}\loadup_{\type}$.

We then show in Theorem~\ref{thm:stability} that a sufficient condition for the system to be stable is that
\begin{equation}\label{eq:stability-intro}
\loadup < 1-\frac{\resourcemax}{\numserver}.
\end{equation}
Note that when the load $\loadup > 1$, no policy can stabilize the system.  Therefore, the condition $\loadup < 1-\frac{\resourcemax}{\numserver}$ is \emph{asymptotically tight} when $\resourcemax = o(\numserver)$.
\footnote{We use the standard asymptotic notation: $f(n)=o(g(n))$ if $\lim_{n\to\infty} f(n)/g(n)=0$; $f(n)=\Theta(g(n))$ if $\lim_{n\to\infty} f(n)/g(n)$ is a positive constant; $f(n)=\Omega(g(n))$ if $\lim_{n\to\infty} f(n)/g(n)$ is no smaller than a positive constant.}
We comment that this sufficient condition is in general not tight non-asymptotically, and finding the exact stability region is a hard open problem.  We refer the readers to the section on related works (Section~\ref{sec:related-work}) for more details.

\subsubsection*{\textbf{Queueing probability}}
The most important result in this paper is our characterization of the queueing probability, denoted by $\PrQup$, which refers to the probability that a job cannot enter service immediately upon arrival in steady state.  Specifically, we chart the joint scaling regimes for the server needs of jobs and the system load to answer the question of when {\em diminishing
queueing probability}
is achievable, i.e., when $\PrQup\to 0$ as $\numserver\to\infty$, for the multi-server job queueing model.

For ease of exposition,
we parameterize the scaling regimes as the number of servers $\numserver$ grows in the following way.
Assume that the load $\loadup=1-\beta \numserver^{-\alpha}$ with $0<\beta<1$ and $\alpha\ge 0$ and the maximum server need $\resourcemax=\Theta(\numserver^{\gamma})$ with $0\le\gamma\le 1$.  We focus on the key parameters $\alpha$ and $\gamma$ and divide the regimes into four regions as illustrated in Figure~\ref{fig:scaling-regimes}.
We will start our discussion with results in simpler regions, Regions~1--3, since they are closely connected to classical queueing models.  Then we will turn to results in the \emph{main region of focus}, Region~4, which corresponds to the novel scaling regime where the load and the server needs scale \emph{jointly}.

\begin{figure}
\centering
\includegraphics[scale=0.5]{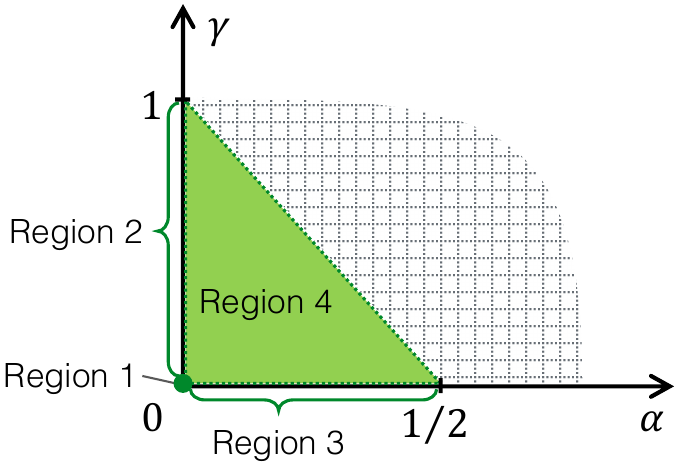}
\caption{Joint scaling regimes for system load $\loadup=1-\beta \numserver^{-\alpha}$ and server need $\resourcemax=\Theta(\numserver^{\gamma})$.}
\label{fig:scaling-regimes}
\vspace{-0.1in}
\end{figure}

\begin{itemize}[leftmargin=1em,itemsep=8pt]
\item \textbf{Region 1:} When the load $\loadup$ stays \emph{constant} ($\alpha=0$) and the server needs also stay \emph{constant} ($\gamma=0$), does the queueing probability, $\PrQup$, diminish as $\numserver\to\infty$?

\emph{Our result:} We show that $\PrQup$ is indeed diminishing in this region.  This result should not be surprising given the classical result that when jobs have unit server need, the queueing probability is diminishing under constant load (see, e.g., \cite{HalWhi_81}, where a special case of their results analyzes a system where each job has a server need of one).

\item \textbf{Region 2:} When the load $\loadup$ stays \emph{constant} ($\alpha=0$) but the server needs \emph{scale} ($\gamma>0$), does the queueing probability, $\PrQup$, diminish as $\numserver\to\infty$?

\emph{Our result:} We show that $\PrQup$ is diminishing when $\gamma<1$, i.e., when a job drawn from the class of highest server need occupies a diminishing fraction of the servers during service.  The threshold $\gamma<1$ is tight in the sense that for a single job-class system, $\PrQup$ is not diminishing when $\gamma=1$.  This is easy to see by noting that when $\gamma=1$, the single job-class system is equivalent to a classical M/M/$s$ system where $s=\lfloor\numserver/\resourcemax\rfloor$ is a constant and the load is also a constant.

\item \textbf{Region 3:} When the load $\loadup$ is in \emph{heavy-traffic} ($\alpha>0$) and the server needs stay \emph{constant} ($\gamma=0$), does the queueing probability, $\PrQup$, diminish as $\numserver\to\infty$?

\emph{Our result:} We show that $\PrQup$ is diminishing when the load satisfies $0<\alpha<1/2$, which is a traffic regime analogous to the sub-Halfin-Whitt regime in the literature.  The threshold  $\alpha<1/2$ is tight in the sense that for a single job-class system, $\PrQup$ is not diminishing when $\alpha\ge 1/2$.  Again, this can been seen by noting that when $\alpha\ge 1/2$, the single job-class system is equivalent to a classical M/M/$s$ system with $s=\lfloor\numserver/\resourcemax\rfloor=\Theta(\numserver)$, whose traffic is at least as heavy as the Halfin-Whitt regime.

\item \textbf{Region 4:} When the load $\loadup$ is in \emph{heavy-traffic} ($\alpha>0$) and the server needs also \emph{scale} ($\gamma>0$), does the queueing probability, $\PrQup$, diminish as $\numserver\to\infty$? 

\emph{Our result:} We show that $\PrQup$ is diminishing when $2\alpha+\gamma<1$.  This exact characterization of the joint scaling regime for diminishing queueing probability formalizes the intuition that queueing is negligible in large systems if each job does not need too many servers and the load is not too heavy.
This threshold $2\alpha+\gamma<1$ is also tight in the sense that for a single job-class system, the queueing probability is not diminishing when $2\alpha+\gamma\ge 1$.  To see this, we again consider the equivalent M/M/$s$ system with $s=\lfloor\numserver/\resourcemax\rfloor$.  It is not hard to verify that the load of this equivalent system is $1-\Theta(\numserver^{-\alpha})=1-\Theta(s^{-\frac{\alpha}{1-\gamma}})$, which is at least as heavy as Halfin-Whitt since $\frac{\alpha}{1-\gamma}\ge1/2$ under the condition that $2\alpha+\gamma\ge 1$.
\end{itemize}

In fact, our characterization of the queueing probability in Theorem~\ref{thm:diminishing-queueing} has the following general form, which not only unifies all of the four regions above but also provides an upper bound on the rate of convergence for the diminishing queueing probability:
\begin{equation}\label{eq:queueingprob-intro}
\PrQup\le \frac{1}{1-\loadup}\left(3\numtype\sqrt{\frac{\resourcemax}{\numserver}}+\frac{\resourcemax}{\numserver}\right).
\end{equation}

\subsubsection*{\textbf{Transient analysis for the number of jobs in the system}}
We also characterize the transient behavior of the system under constant load in the limit as the number of servers $\numserver$ goes to infinity. Specifically, in Theorem~\ref{THM:FINITE_TIME} we show that the sequence of processes for appropriately scaled number of jobs in the system converges to a deterministic system over any finite time interval, as $\numserver \rightarrow \infty.$  In particular, the deterministic system is the unique solution to a fluid model, which converges to an equilibrium point as $t\rightarrow \infty.$ Therefore, we have established that, for any sufficiently large {\em fixed-time} interval, the scaled number of jobs of a large system (large $n$) is approximated by the equilibrium point of the fluid model. We conjecture that the stationary distribution converges to the equilibrium point.

\subsection*{\textbf{Technical challenges and our new approach}}
To better understand the queueing dynamics, it is crucial to characterize how many jobs of each class are in service, since that determines the total departure rate of jobs.  However, due to the heterogeneous server needs of jobs, the number of jobs in service turns out to be very hard to analyze.
The root of the difficulty is that the process by which jobs enter service can be quite bursty.   For example, when a job in service finishes, it could happen that \emph{many} jobs in the queue will be admitted into service \emph{all at the same time} if the completed job has a high server need; or it could happen that \emph{no job} in the queue will enter service at all if the completed job frees up a small number of servers and the head-of-queue job cannot fit.  This ``jitter effect'' makes it highly challenging to reason about jobs in service.

Our approach is similar in spirit to the drift method (see Section~\ref{sec:proof-diminishing}).  However, our construction of the Lyapunov function cannot be derived using existing methods as in \cite{LiuYin_20, LiuGonYin_20, LiuYin_19, WenWan_21, WenZhoSri_21,  ErySri_12, MagSri_16, WanMagSri_18} and instead requires new insights.  In particular, we find a Lyapunov function whose drift has an interesting upper-bounding function, which then allows us to establish an upper bound on the queueing probability.  
Our Lyapunov function also has an intuitive meaning: it corresponds to the  total work contributed by all the jobs in the system.
Here the work contributed by a job is the product of its server need and its expected service time. More details on the drift method and the steps in our approach are given in Sections~\ref{subsec:background} and \ref{subsec:tech-diminishing}.

\section{Related Works}\label{sec:related-work}
\subsubsection*{\textbf{Multi-server job models}}

At present almost nothing is known regarding the performance of the
multi-server job model.  A few works have sought to derive the
steady-state distribution of the number of jobs in the system under a 
highly simplified model, where all jobs have the same exponential
service duration $S_{\type}\sim \Exp(\mu)$, for all classes $\type$, and there are only
$\numserver=2$ servers, see papers by \citet{BrilGre_84} and \citet{FilippopoulosKaratza07}.  However these solutions are highly complex, typically involving roots to a quartic equation; this makes the 
solutions impractical.  Earlier work by \citet{Kim79}, based on 
a matrix analytic approach, is similarly 
impractical since it scales exponentially with the size of the system.  A survey paper by Melikov \cite{Melikov96} summarizes these approaches and a few others.  
In summary, understanding the response time for multi-server job systems with more than $\numserver=2$ servers is entirely open,
even when all jobs have the same exponentially-distributed service durations.

Not only is the response time intractable for the multi-server job model, but even the stability region for this model (under FCFS scheduling) is only partially understood. In 2017, \citet{RumyantsevMorozov17} derived the stability region for the multi-server job model where, for all class~$\type$, all 
jobs have the {\em same} exponential service duration $S_\type \sim \Exp(\mu)$.  This work was generalized in \cite{MorozovRumyantsev16} and \cite{Afanasevaetal19} to allow for more general arrival processes, still under the assumption that all jobs have the same exponential service duration.  Very recently, \citet{StabilityMultiserver20} derived a simple closed-form expression for the stability region for the multi-server job model where jobs have {\em different} exponential service time durations. Unfortunately, the work \cite{StabilityMultiserver20} is limited to the case of only two classes.  The characterization of stability in the case of more than two job classes is open.

We reiterate that the key technical difficulty is rooted in the heterogeneous server needs of multi-server jobs, rendering classical general frameworks for stability less effective or inapplicable.
The sufficient condition \eqref{eq:stability-intro} that we establish for stability is derived under the framework based on Lyapunov drift (see, e.g., \cite{SriYin_14}, for a comprehensive coverage). Here the heterogeneity in server needs makes it hard to tighten this condition in the non-asymptotic regime.  Another general stability framework is based on the saturation rule \cite{BacFos_95} for systems with certain monotonicity.  Roughly speaking, the monotonicity property in this framework requires that when the queueing system has a finite number of job arrivals, delaying job arrival times can only delay the time when all the jobs are completed.  The monotonicity property is satisfied by many classical queueing models including the M/M/$\numserver$ model.  However, it is not satisfied by the multi-server job model under FCFS.  Counterexamples can be constructed where delaying job arrivals actually leads to a shorter overall completion time due to less server idling time.

While very little is known about the performance of multi-server job
models, there is a close cousin of the model, called the {\em dropping
model} which is analytically tractable under very general settings.  In the dropping
model, those jobs which cannot immediately receive service are simply dropped.  When the job durations are exponentially distributed, the stationary distribution of the dropping model exhibits a beautiful product form.
\citet{ArtKau_79} were the first to observe the product
form.  \citet{Whi_85} generalized the model to allow jobs to demand multiple resource types (e.g., both CPU and I/O).  \citet{vanDijk89}
allowed durations to be generally distributed.  \citet{Tikhonenko05}
combined aspects of \cite{Whi_85} and \cite{vanDijk89}.

The multi-server job model is also related to \emph{streaming models} for communication networks.  Here the resource being shared is bandwidth in the network.  The ``jobs'' are audio or video flows which require a fixed bandwidth reservation to run (this is akin to needing a fixed number of servers).  Flows requiring fixed bandwidth are often referred to as streaming (see \cite{BenBenDel_01}), but are also sometimes referred to as ``inelastic jobs'' (see \cite{PonKimMel_10, Mel_96}).  The papers dealing with streaming jobs typically operate in the dropping model, where the goal is to schedule to minimize a cost related to dropping probabilities (see \cite{DasSri_99,HunLaw_97,BeaGibZac_95,HunKur_94}).  Note that the setting here is more complex than the multi-server job dropping model.  The added complexity sometimes comes from the fact that the authors are seeking an optimal dropping policy and sometimes is due to a network setting.

Another class of related models are models for the virtual machine (VM) scheduling problem \cite{MagSri_2013,MagSriYin_14,XieDonLu_15,PsyGha_18,PsyGha_19}, but again only limited results are available for job response times and most works focus on stability.  In the VM scheduling problem, a VM job requests multiple units of resource such as CPUs.  But the server model in VM scheduling is different from the multi-server job model we study in this paper.  In VM scheduling, a server has several CPUs and usually can accommodate multiple jobs, but a job cannot be spread across multiple servers.

\subsubsection*{\textbf{Diminishing queueing probability in other models}}
The concept of diminishing queueing probability has been studied across a wide class of problems.   As already discussed, in the M/M/$\numserver$ model, it was shown that diminishing queueing probability occurs in the sub-Halfin-Whitt regime.
Recently, motivated by the desire for low latency in today's computing applications, the interest in diminishing queueing probability (and queueing time) have been greatly renewed.  Investigating conditions and policies for achieving diminishing queueing probability has become a rapidly growing research area with a rich body of work.

The queueing probability (and queueing time) under scaling regimes has also been studied in load-balancing models, where each server has its own queue, and where each arriving job is immediately dispatched to a server upon arrival.  The Join-the-Idle-Queue (JIQ) policy routes every arriving job to an idle queue, if one exists, otherwise to a randomly selected queue \cite{LuXieKli_11,Sto_15}. For JIQ it was shown that diminishing queueing probability can be achieved when the load, $\loadup$, is lighter than $1- \frac{1}{\numserver}$ \cite{LiuYin_20,LiuGonYin_20,Liu_19}.   Another example is the Power-of-$d$-Choices policy (Po$d$), where every arriving job samples $d$ queues and goes to the shortest of these \cite{VveDobKar_96,Mit_96}.   For Po$d$ it was shown that diminishing queueing probability can be achieved when $d=\Omega\left(\frac{\log\numserver}{1-\loadup}\right)$ in the sub-Halfin-Whitt regime and when $d=\Omega\left(\frac{\log^2\numserver}{1-\loadup}\right)$ for $\loadup$ lighter than $1- \frac{1}{\numserver}$ \cite{LiuYin_20,LiuGonYin_20,LiuYin_19}.
Moreover, the Join-the-Shortest-Queue (JSQ) load-balancing policy can be viewed as a special case of Po$d$, where $d=\numserver$.  Thus it too has diminishing queueing probability.
Finally, \citet{WenZhoSri_21} extend the traditional load-balancing model to address data locality and again prove diminishing queueing probability under specific conditions on the data locality.

Another model where diminishing queueing time has been investigated is the multi-task job model.  Here, again, each server has its own queue.  However, each job consists of $k$ tasks that can run on servers in parallel with i.i.d.\ service time requirements, but the job is not completed until all of its tasks complete \cite{WenWan_21}.  The multi-task model is closest to our own, but differs in several ways.   Firstly, there are queues at each server, where the tasks of a job need to be dispatched to the queues upon arrival.  This implies, importantly, that the tasks of a job typically do not end up  executing simultaneously.  Secondly, the multi-task model is different from our own in that the individual tasks of a job have i.i.d.\ service time requirements which may be quite different from each other.  Within the multi-task model, \citet{WenWan_21} analyze a Batch-Filling-$d$ policy, where each arriving job samples $kd$ queues and fills its tasks into queues so as to minimize the individual task queueing times.
For this Batch-Filling-$d$ algorithm, \citet{WenWan_21} show that diminishing queueing time for jobs can be achieved in the sub-Halfin-Whitt regime when $d$ is sufficiently high.

\section{Model}\label{sec:model}

The basic description of the multi-server job queueing model with $\numserver$ servers  follows Figure~\ref{fig:multiserverMGk}.  The notation that we have discussed so far appears in Table~\ref{table:variables_job}.
We define $\resourcemax:= \max_{1\leq \type \leq \numtype} \resourceup_\type$ as the maximum server need, and $\loadup:=\sum_{\type=1}^{\numtype}\loadup_\type =\sum_{\type=1}^{\numtype}\frac{\lambdasup_\type \resourceup_\type}{\numserver \mu_\type}$ as the system load.
Let $\mu_{\max}:=\max_{1\leq \type \leq \numtype} \mu_\type.$

\begin{table}
\hspace{-0.3in}
\begin{minipage}[c]{.45\textwidth}
\centering
\begin{tabular}{rl}
\toprule
Service time & $S_\type \sim\text{Exp}(\mu_{\type})$\\
Server need & $\resourceup_{\type}$\\
Arrival rate & $\lambdasup_{\type}:=\lambdasup\cdot \pup_{\type}$\\
Class~$\type$ load &  $\loadup_{\type}:=\frac{\lambdasup_{\type} \resourceup_{\type}}{\numserver\mu_{\type}}$\\
\bottomrule
\end{tabular}
\caption{Variables related to class $\type$ jobs}
\label{table:variables_job}
\end{minipage}
\hspace{0.1in}
\begin{minipage}[c]{.45\textwidth}
\centering
\begin{tabular}{rl}
\toprule
Number of servers & $\numserver$\\
Number of class $\type$ jobs in system & $\numjobup_{\type}$\\
Number of class $\type$ jobs in queue &  $\Qup_\type$\\
Classes of jobs in system & $\stateup$\\
\bottomrule
\end{tabular}
\caption{Variables related to system states}
\label{table:variables_system}
\end{minipage}
\vspace{-0.3in}
\end{table}

Arriving jobs enter a First-Come-First-Served (FCFS) queue, which has an infinite capacity.  
 As illustrated in Figure~\ref{fig:multiserverMGk}, when a job arrives, it enters service if the queue is empty and the number of idle servers is at least the job's server need; otherwise, the job enters the queue to wait for service. 

The state of the multi-server job queueing system can be described by an ordered list of the classes of all jobs in the system, in arrival order. We denote the state at time $t$  by $\stateup(t)=\left(u_1(t),u_2(t),\dots,u_J(t)\right),$ where $u_j(t)$ is the class of the $j$th oldest job in the system at time $t$. 
For the state shown in Figure~\ref{fig:multiserverMGk}, assuming that the three jobs in service arrived in the order of $3$, then $1$, then $4$,  the state descriptor is given by $(3,1,4,3,4,1,4,1)$. 
Note that this state descriptor is \emph{ordered} and the order determines which jobs are in service.  Therefore, the total job departure rate depends on the order, which implies that the queue is not an order-independent queue \cite{Krz_11,BonCom_17}.
It can be verified that the continuous-time process, $\stateup=\bigl\{ \stateup(t),t \geq 0 \bigr\}$, forms an irreducible Markov chain taking values in $\mU = \{(u_1, u_2, \dots, u_J)\colon J\in\mathbb{N}, u_j \in \{1,2,\dots,\numtype\}\forall j\}$.

Let $\numjobup_\type(t)$ denote the number of class $\type$ jobs in the system at time $t$ for $i=1,\ldots,\numtype$. Let $\numjobvecup(t)$ be the corresponding vector. Note that the system size process $\numjobvecup=\bigl\{\numjobvecup(t),t\geq 0 \bigr\}$ is not a Markov chain. We define the queue size process $\Qvecup=\bigl\{\Qvecup(t),t\geq 0\bigr\}$, where $\Qvecup(t)=\bigl(\Qup_1(t),\ldots,\Qup_{\numtype}(t)\bigr)$
and $\Qup_\type(t)$ is the number of class $\type$ jobs waiting in the queue at time $t.$ 
We summarize the variables related to system state in Table~\ref{table:variables_system}.

Observe that each of $\numjobvecup$ and $\Qvecup$ is a function of the Markov chain $\stateup.$ When the Markov chain $\stateup$ is positive recurrent with a unique stationary distribution, $\numjobvecup$ also has a unique stationary distribution,  as does $\Qvecup$.
We use $\numjobvecup(\infty)$ and $\Qvecup(\infty)$ to denote the random variables that have the stationary distribution of $\numjobvecup$ and $\Qvecup$, respectively.

In this paper, we are interested in characterizing the queueing probability, under FCFS scheduling, in different scaling regimes. Here we use the term \emph{queueing probability}, denoted by $\PrQup$,  to refer to the steady-state probability that a job cannot enter service immediately upon arrival.
Note that when there are more than or equal to $\resourcemax$ idle servers, an arriving job will enter service immediately.  Thus,
we can upper bound $\PrQup$ as follows
\begin{align}
\PrQup\leq \Pr\left(\sum_{\type=1}^{\numtype}\resourceup_{\type}\numjobup_{\type}(\infty)> \numserver-\resourcemax\right). 
\end{align}

\section{Main Results}\label{sec:results}
In this section we formally present our main theorems.


\begin{theorem}[{\bf Stability Condition}]\label{thm:stability}
Consider the system with $\numserver$ servers and $\numtype$ classes of multi-server jobs.  Under the FCFS policy, the Markov chain $\stateup$ is positive recurrent, i.e., the system is stable, when the load $\loadup$ satisfies
\begin{equation}
\loadup < 1-\frac{\resourcemax}{\numserver}.
\end{equation}
\end{theorem}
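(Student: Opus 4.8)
The plan is to apply the Foster--Lyapunov drift criterion to the Markov chain $\stateup$, using a Lyapunov function that measures the total \emph{work} in the system. Concretely, I would set
\[
V(\vu) = \sum_{\type=1}^{\numtype} \frac{\resourceup_\type}{\mu_\type}\,\numjobup_\type,
\]
where $\numjobup_\type$ is the number of class~$\type$ jobs in state $\vu$, so that each job contributes its server need times its expected service time. The appeal of this choice is that it sidesteps the ``jitter effect'': since $V$ counts jobs \emph{in the system} rather than jobs in service, the complicated bursty admission events (where one departure may pull many queued jobs into service at once) leave $V$ unchanged, and $V$ moves only on arrivals and departures.

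Next I would compute the generator drift $GV$. A class~$\type$ arrival occurs at rate $\lambdasup_\type$ and raises $V$ by $\resourceup_\type/\mu_\type$; letting $\Zup_\type$ denote the number of class~$\type$ jobs in service, each such job completes at rate $\mu_\type$ and lowers $V$ by $\resourceup_\type/\mu_\type$. Hence
\[
GV(\vu) = \sum_{\type=1}^{\numtype}\lambdasup_\type\frac{\resourceup_\type}{\mu_\type} - \sum_{\type=1}^{\numtype}\resourceup_\type \Zup_\type = \numserver\loadup - \sum_{\type=1}^{\numtype}\resourceup_\type \Zup_\type,
\]
where the first term equals $\numserver\loadup$ by the definition of the load, and the second term $\sum_{\type}\resourceup_\type \Zup_\type$ is precisely the number of busy servers.

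The crux of the argument is a structural observation that converts head-of-queue blocking into a useful lower bound on the number of busy servers: whenever the queue is nonempty, the head-of-queue job has server need at most $\resourcemax$ yet cannot enter service, so the number of idle servers is strictly less than $\resourcemax$, i.e.\ $\sum_{\type}\resourceup_\type \Zup_\type > \numserver - \resourcemax$. Substituting this in, for every state with a nonempty queue,
\[
GV(\vu) \le \numserver\loadup - (\numserver - \resourcemax) = -\numserver\Bigl(\bigl(1-\tfrac{\resourcemax}{\numserver}\bigr) - \loadup\Bigr) =: -\numserver\delta,
\]
and $\delta > 0$ exactly under the stated condition $\loadup < 1 - \resourcemax/\numserver$.

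Finally I would verify the remaining hypotheses of the drift criterion. The states with an empty queue have all jobs in service, so the total occupancy is at most $\numserver$ and hence at most $\numserver$ jobs are present; there are only finitely many such ordered states, so they form a finite exceptional set on which $GV$ is trivially bounded. Moreover the total transition rate out of any state is at most $\lambdasup + \numserver\mu_{\max}$, since at most $\numserver$ jobs are ever in service, so the chain is non-explosive and the standard Foster--Lyapunov theorem applies, yielding positive recurrence. I expect the main obstacle to be stating and justifying the busy-server bound cleanly---that idle capacity below $\resourcemax$ is forced by FCFS head-of-queue blocking---together with confirming that the empty-queue set is genuinely finite; the drift computation itself is routine once $V$ is chosen.
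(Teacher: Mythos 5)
Your proposal is correct and follows essentially the same route as the paper: the identical work-based Lyapunov function $\sum_{\type}\frac{\resourceup_\type}{\mu_\type}\numjobup_\type$, the same drift computation $\numserver\loadup-\sum_{\type}\resourceup_\type \Zup_\type$, the same head-of-queue-blocking observation that a nonempty queue forces more than $\numserver-\resourcemax$ busy servers, and the same Foster--Lyapunov conclusion. The only (cosmetic) difference is your choice of finite exceptional set --- all empty-queue states rather than the paper's set $\{\classseq\colon\sum_{\type}\resourceup_\type\syssize_\type\le\numserver-\resourcemax\}$ --- and both choices are finite and valid.
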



\begin{theorem}[{\bf Diminishing Queueing Probability}]\label{thm:diminishing-queueing}
Consider the system with $\numserver$ servers and $\numtype$ classes of multi-server jobs.  Assume that it uses the FCFS policy and the load satisfies the stability condition $\loadup < 1-\frac{\resourcemax}{\numserver}$. Then the queueing probability is upper bounded by:
\begin{equation}\label{eq:queueing-upper-bound}
\PrQup\le \frac{1}{1-\loadup}\left(3\numtype\sqrt{\frac{\resourcemax}{\numserver}}+\frac{\resourcemax}{\numserver}\right).
\end{equation}
Consequently, if the joint scaling of $\loadup$ and $\resourcemax$ satisfies $\frac{1}{1-\loadup}\sqrt{\frac{\resourcemax}{\numserver}}=o(1)$, then the queueing probability is diminishing, i.e.,
\begin{equation}\label{eq:queueing-prob-diminish}
\lim_{\numserver\to\infty}\PrQup=0.
\end{equation}
\end{theorem}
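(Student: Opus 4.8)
The plan is to reduce the queueing probability to a lower-tail estimate on the number of idle servers, and then control that tail through two uses of a single Lyapunov function equal to the total work in the system. First I would make the reduction precise. By PASTA, an arriving job must queue only if fewer than $\resourcemax$ servers are idle, which gives $\PrQup \le \Pr\bigl(\sum_{\type=1}^{\numtype}\resourceup_\type\numjobup_\type(\infty) > \numserver - \resourcemax\bigr)$, exactly the bound already recorded in the model section. The crucial structural observation is head-of-queue blocking: whenever some job waits, the head-of-queue job (whose need is at most $\resourcemax$) does not fit, so strictly fewer than $\resourcemax$ servers are idle; and when the queue is empty every job is in service. Letting $\Zup_\type$ be the number of class-$\type$ jobs in service and $B=\sum_\type\resourceup_\type\Zup_\type$ the busy-server count, these two facts show that the event $\{\sum_\type\resourceup_\type\numjobup_\type(\infty)>\numserver-\resourcemax\}$ coincides with $\{B(\infty)>\numserver-\resourcemax\}$. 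Hence it suffices to bound the lower tail of the idle-server count $\numserver-B(\infty)$.

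Next I would pin down the first moment. Take the Lyapunov function $W := \sum_{\type=1}^{\numtype}\frac{\resourceup_\type}{\mu_\type}\numjobup_\type$, the total work (server need times mean service time) in the system. A class-$\type$ arrival raises $W$ by $\resourceup_\type/\mu_\type$ at rate $\lambdasup_\type$ and a class-$\type$ departure lowers it by the same amount at rate $\mu_\type\Zup_\type$, so the generator is $GW=\numserver\loadup-B$ because $\sum_\type\lambdasup_\type\resourceup_\type/\mu_\type=\numserver\loadup$. Setting $\expect[GW(\infty)]=0$ in steady state yields the conservation law $\expect[B(\infty)]=\numserver\loadup$, i.e.\ the mean idle count is $\numserver(1-\loadup)$, which is positive and of order $\numserver(1-\loadup)$ under the stability condition.

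The heart of the proof is a fluctuation bound. I would compute the drift of $W^2$, whose quadratic-variation term is $\sum_\type(\resourceup_\type/\mu_\type)^2(\lambdasup_\type+\mu_\type\expect[\Zup_\type(\infty)])$; using the per-class balance $\mu_\type\expect[\Zup_\type(\infty)]=\lambdasup_\type$ this equals $2\numserver\sum_\type\resourceup_\type\loadup_\type/\mu_\type=O(\numserver\resourcemax)$, since $\resourceup_\type\le\resourcemax$ and $\sum_\type\loadup_\type=\loadup\le 1$. Imposing $\expect[G(W^2)(\infty)]=0$ then controls the fluctuation of $W$, and hence of the idle-server count, at the scale $\sqrt{\numserver\resourcemax}$. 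The main obstacle lies exactly here: the quadratic drift most naturally produces a bound on the covariance of $W$ with $B$ rather than on $\mathrm{Var}(B(\infty))$ itself, and $B$ is badly behaved because of the jitter effect---when a high-need job completes, an unpredictable number of queued jobs may enter service at once, so $B$ is not a birth--death process and the standard drift machinery does not apply. Converting the drift identity into a genuine fluctuation bound on the idle-server count, while tracking the heterogeneity across the $\numtype$ classes (which is what produces the factor $\numtype$), is the novel step.

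Finally I would convert the fluctuation bound into the tail bound. Writing $t:=\numserver(1-\loadup)-\resourcemax>0$, the inclusion $\{B(\infty)>\numserver-\resourcemax\}\subseteq\{|B(\infty)-\expect B(\infty)|>t\}$, together with Markov's inequality applied to $|B(\infty)-\expect B(\infty)|$ and Cauchy--Schwarz ($\expect|B(\infty)-\expect B(\infty)|\le\sqrt{\mathrm{Var}(B(\infty))}$), gives a one-sided bound of the form $\sqrt{\mathrm{Var}(B(\infty))}/t$; using the first absolute moment rather than Chebyshev's second moment is precisely what yields the non-squared $\sqrt{\resourcemax/\numserver}$ dependence instead of $\resourcemax/\numserver$. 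Substituting $\mathrm{Var}(B(\infty))=O(\numserver\resourcemax)$ and $t=\Theta(\numserver(1-\loadup))$ produces $\frac{1}{1-\loadup}\sqrt{\resourcemax/\numserver}$ up to constants, and careful bookkeeping of the denominator $t$ and the class-dependent constants supplies the factor $3\numtype$ and the additive $\frac{\resourcemax}{\numserver}$ correction, establishing \eqref{eq:queueing-upper-bound}. The diminishing-queueing conclusion \eqref{eq:queueing-prob-diminish} is then immediate whenever $\frac{1}{1-\loadup}\sqrt{\resourcemax/\numserver}=o(1)$.
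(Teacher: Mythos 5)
Your skeleton matches the paper up to the critical step, and then stops exactly where the real work lies. The reduction $\PrQup\le\Pr\bigl(\sum_{\type}\resourceup_\type\numjobup_\type(\infty)>\numserver-\resourcemax\bigr)$, the Lyapunov function $W=\sum_\type\resourceup_\type\numjobup_\type/\mu_\type$ (the paper's $g$), and the mean identity $\expect[B(\infty)]=\numserver\loadup$ obtained from $\expect[\Delta W(\infty)]=0$ are all sound (though you would still need to justify $\expect[W(\infty)]<\infty$ before setting the steady-state drift to zero; the paper does this via a Hajek-type moment bound, its Lemmas~\ref{lem:bounds-drift} and~\ref{lem:g-bound}). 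The gap is the fluctuation bound. Your plan needs $\mathrm{Var}\bigl(B(\infty)\bigr)=O(\numserver\resourcemax)$, but the identity $\expect[\Delta W^2(\infty)]=0$ delivers only $\expect\bigl[W(\infty)\bigl(B(\infty)-\numserver\loadup\bigr)\bigr]=O(\numserver\resourcemax)$, i.e.\ a bound on $\mathrm{Cov}\bigl(W(\infty),B(\infty)\bigr)$ --- as you yourself concede. There is no route from this covariance to $\mathrm{Var}\bigl(B(\infty)\bigr)$: Cauchy--Schwarz goes the wrong way, $B$ is not a monotone function of $W$ (states with the same total work can have very different busy-server counts depending on how much of the work sits in the queue), and $B$'s own dynamics are exactly the intractable, jittery object the analysis must avoid. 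So the proposal as written does not prove \eqref{eq:queueing-upper-bound}; it reduces it to an unproved claim that is the theorem's actual content.

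The paper closes this hole by never estimating fluctuations of $B$ at all. First, it upper-bounds the drift of $g$ by a piecewise function (Lemma~\ref{lem:g-h}): $\Delta g\le\resourcemax+\numserver\loadup-\sum_\type\resourceup_\type x_\type$ when $\sum_\type\resourceup_\type x_\type\le\numserver-\resourcemax$, and $\Delta g\le\resourcemax-\numserver(1-\loadup)$ on the complementary event; your head-of-queue-blocking observation is used precisely here, to lower-bound the busy-server count by $\numserver-\resourcemax$ on that event. Plugging this into $\expect[\Delta g(\stateup(\infty))]=0$ converts the tail probability directly into the per-class one-sided moments $\expect\bigl[(\numserver\loadup_\type-\resourceup_\type\numjobup_\type(\infty))^+\bigr]$, with prefactor $\frac{1}{\numserver(1-\loadup)}$. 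Second, each such moment is bounded by $3\sqrt{\numserver\resourceup_\type}$ (Lemma~\ref{lem:ssc}) by applying Hajek's drift lemma to $f_\type=(\numserver\loadup_\type-\resourceup_\type x_\type)^+$: since the class-$\type$ departure rate is at most $\mu_\type x_\type$ no matter how blocking behaves, $\Delta f_\type\le-\mu_\type(\numserver\loadup_\type-\resourceup_\type x_\type)$, which is below $-\mu_\type\sqrt{\numserver\resourceup_\type}$ whenever $f_\type>\sqrt{\numserver\resourceup_\type}$, and $f_\type$ has jumps bounded by $\resourceup_\type$. This one-sided, per-class lower-tail control (summed over classes, whence the factor $3\numtype$) is what replaces your variance bound; it sidesteps the jitter because only the trivially bounded departure rate of a single class enters, never the batch admissions into service. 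If you want to salvage your route, this is the substitute to look for: a drift argument for a one-sided functional with bounded jumps, rather than a second-moment identity for $W$.
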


Finally, we show that the evolution of the scaled number of jobs of each class in the system, $\frac{1}{\lambdasup_\type}\numjobup_\type(t)$, can be approximated by a deterministic system, over any finite time horizon $[0,T].$ In particular, the deterministic system is governed by the following differential equations:
\begin{equation}\label{eq:ODE0}
\dot{\numjoblim}_\type(t) = 1 - \min\left\{\mu_\type\numjoblim_\type(t),\frac{1}{\load_{\type}}\right\},\qquad \type\in\{1,\ldots,\numtype\},
\end{equation}

Note that when each job can enter service immediately upon arrival, the original system has the same dynamics as an ``enlarged'' system where each class of jobs has access to a separate set of $\numserver$ servers and is served in FCFS order, i.e., class $\type$ jobs run as an M/M/$\frac{\numserver}{\resourceup_\type}$ queue with arrival rate $\lambdasup_\type$ and service rate $\mu_\type$. The diminishing queueing probability implied by Theorem~\ref{thm:diminishing-queueing} suggests that the original system can be approximated by such an enlarged system. On the other hand, we can view the solution $\numjoblim_\type(t)$ to equation~(\ref{eq:ODE0}) as a deterministic approximation to the sample path of the scaled number of class $\type$ jobs in the enlarged system. The deterministic system $\numjoblimvec(t)$ hence also provides an approximation for the scaled number of jobs in the original system.

\begin{theorem}[{\bf Transient behavior of the number of jobs in the system}]
\label{THM:FINITE_TIME} Suppose that for each class $\type\in \{1,\ldots,\numtype\}$, the load satisfies $\loadup_\type=\load_\type>0$ for all $\numserver$,  and $\load=\sum_{\type=1}^{\numtype}\load_\type<1$. 
Assume that $\lim_{\numserver\to\infty}\frac{1}{\lambdasup_\type}\numjobup_\type(0)=\numjoblim_{\type}(0)$
in probability for each $\type$, where $\numjoblimvec(0)$
is a deterministic initial condition such that $0\leq \numjoblim_\type(0)<\frac{1}{\load\mu_\type}$ for all $\type$.
Let $\numjoblimvec(t)$ be the unique solution to the differential equation~\eqref{eq:ODE0} given initial condition $\numjoblimvec(0).$ If $\resourcemax$ satisfies $\resourcemax=o(n),$ then for any fixed $T>0$, the following holds:
\begin{align}
\lim_{\numserver\to\infty}\sup_{0\le t\le T}\sum_{\type=1}^{\numtype}\left|\frac{1}{\lambdasup_\type}\numjobup_\type(t)-\numjoblim_{\type}(t)\right|=0,\quad\text{in probability.} \label{eq:finite_time_as}
\end{align}
\end{theorem}


\section{Proofs for Stability and Diminishing Queueing Probability}\label{sec:proof-diminishing}
\newcommand{\classseq}{\bm{u}}
\newcommand{\stateentryup}{U^{(\numserver)}}
\newcommand{\syssize}{x}
\newcommand{\syssizevec}{\bm{\syssize}}
\newcommand{\queue}{q}
\newcommand{\queuevec}{\bm{\queue}}
\newcommand{\nondom}{\theta}

In this section we present the proofs of stability (Theorem~\ref{thm:stability}) and diminishing queueing probability (Theorem~\ref{thm:diminishing-queueing}).  Since the structure of our proofs is similar in spirit to that of the recently-developed drift method \cite{ErySri_12,MagSri_16,WanMagSri_18}, we first provide some background on the drift method in Section~\ref{subsec:background}.  We then sketch our approach in Section~\ref{subsec:tech-diminishing} and highlight how it deviates from the traditional drift method.  Finally, we present the detailed proofs in Section~\ref{subsec:proofs-stability-diminishing}.

\subsection{Preliminaries on drift method}\label{subsec:background}
The general idea of the drift method is to construct an appropriate Lyapunov function and then study the drift of the Lyapunov function.  To be more concrete, consider the Markov chain $\stateup$ that describes the state of our multi-server job queueing system, and let $g\colon \mU \to \mathbb{R}_+$ be a Lyapunov (nonnegative) function on the state space.  Let $\classseq$ denote a state and $r_{\classseq,\classseq'}$ denote the transition rate from a state $\classseq$ to another state $\classseq'$.  Then the drift of $g$ is defined as
\begin{align*}
\Delta g(\classseq)&=\lim_{\delta\to 0}\frac{1}{\delta}\expect\left[g\left(\stateup(t+\delta)\right)-g\left(\stateup(t)\right)\Bigm|g\left(\stateup(t)\right)=\classseq\right]\\
&=\sum_{\classseq'\in\mU}r_{\classseq,\classseq'}\left(g\left(\classseq'\right)-g\left(\classseq\right)\right).
\end{align*}
From the definition, it can be seen that the drift $\Delta g$ is a function of the state $\classseq$.  When $\Delta g$ is applied to the steady state $\stateup(\infty)$, the drift $\Delta g\left(\stateup(\infty)\right)$ is referred to as the \emph{steady-state drift}.

The drift method utilizes the relationship that the expected steady-state drift is zero for well-behaved Lyapunov functions, i.e., $\expect\left[\Delta g\left(\stateup(\infty)\right)\right]=0$.
One key to the drift method is to craft an appropriate Lyapunov function $g$ such that its drift, $\Delta g\left(\stateup(\infty)\right)$, decomposes into terms that correspond to the metric of interest (e.g., the total number of jobs in the system) and terms that are tractable to bound.  There are mainly two general approaches to constructing such Lyapunov functions. The traditional approach \cite{ErySri_12,MagSri_16,WanMagSri_18} is based on establishing state-space collapse, a property whereby the state is concentrated around a strict subset of the entire state space in heavy load.  Another approach \cite{LiuYin_20,LiuGonYin_20,LiuYin_19,WenWan_21,WenZhoSri_21} is based on solving the so-called Stein's equation after coupling the system with a simple fluid model.

However, it is hard to apply either of these two approaches to our problem.
The first approach is most effective when the load scaling is in the traditional heavy-traffic regime, where the number of servers stays constant and the load approaches $1$; this is not our setting.  The second approach requires finding a reasonable fluid model that admits a solution to Stein's equation. Such a fluid representation is hard to find for our ``jittery'' model, where jobs sometimes enter service in batches.

\subsection{Our approach}\label{subsec:tech-diminishing}
Our construction of the Lyapunov function $g$ does not fall under either of the two typical approaches in the literature, although our approach follows the general framework of the drift method in the sense that we also exploit the identity $\expect\left[\Delta g\left(\stateup(\infty)\right)\right]=0$.

We consider a Lyapunov function $g\colon \mU\rightarrow \mathbb{R}_+$ defined as follows:
\begin{equation}\label{eq:g}
g(\classseq)=\sum_{\type=1}^{\numtype}\frac{\resourceup_{\type}\syssize_{\type}}{\mu_{\type}},
\end{equation}
where $\syssize_{\type}$ denotes the number of class~$\type$ jobs in system (in queue plus in service), with the understanding that $\syssize_{\type}$ is a function of the state $\classseq$.  Note that $g(\classseq)$ has the following intuitive meaning:  Since each class~$\type$ job needs to occupy $\resourceup_{\type}$ servers for an average of $1/\mu_{\type}$ duration of time, we say that the \emph{expected work} contributed by a class~$\type$ job, measured by this space-time product, is $\resourceup_{\type}/\mu_{\type}$.  Then $g(\classseq)$ represents the total amount of expected work in the system.

\sloppypar{
We then extract information from the drift $\Delta g(\classseq)$ in the following way.  We first derive the following upper-bounding function on $\Delta g(\classseq)$, which will be formally stated as Lemma~\ref{lem:g-h} in Section~\ref{subsec:proofs-stability-diminishing}:
\begin{equation*}
\Delta g(\classseq)\le \resourcemax +
\begin{cases}
\numserver\loadup-\sum_{\type=1}^{\numtype}\resourceup_{\type}\syssize_{\type}, & \text{if }\sum_{\type=1}^{\numtype}\resourceup_{\type}\syssize_{\type}\le \numserver-\resourcemax,\\
-\numserver\left(1-\loadup\right), & \text{if }\sum_{\type=1}^{\numtype}\resourceup_{\type}\syssize_{\type}>\numserver-\resourcemax.
\end{cases}
\end{equation*}
Observe that the second term in the upper-bounding function corresponds to the condition $\sum_{\type=1}^{\numtype}\resourceup_{\type}\syssize_{\type}>\numserver-\resourcemax$.   This will allow us to relate $\Pr\left(\sum_{\type=1}^{\numtype}\resourceup_{\type}\numjobup_{\type}(\infty)> \numserver-\resourcemax\right)$ to $\expect\left[\Delta g\left(\stateup(\infty)\right)\right]$, where recall that $\numjobup_{\type}(\infty)$ denotes the number of class~$\type$ jobs in steady state.  Then utilizing the identity $\expect\left[\Delta g\left(\stateup(\infty)\right)\right]=0$ will eventually lead to an upper bound on the queueing probability $\PrQup$.}

\subsection{Proofs}\label{subsec:proofs-stability-diminishing}
Consider the Lyapunov function $g$ defined in \eqref{eq:g} and let $\Delta g(\classseq)$ denote its drift.  We organize our proofs into three parts:  we first establish the upper-bounding function on $\Delta g(\classseq)$ in Lemma~\ref{lem:g-h}, which underpins all of our analysis; we then prove the stability result (Theorem~\ref{thm:stability}) based on Lemma~\ref{lem:g-h}; finally we prove the results on queueing probability (Theorem~\ref{thm:diminishing-queueing}) through several more lemmas.  The flow chart of the proofs is given in Figure~\ref{fig:proof-flow-chart}.
\begin{figure}[h!]
    \centering
    \includegraphics[scale=0.5]{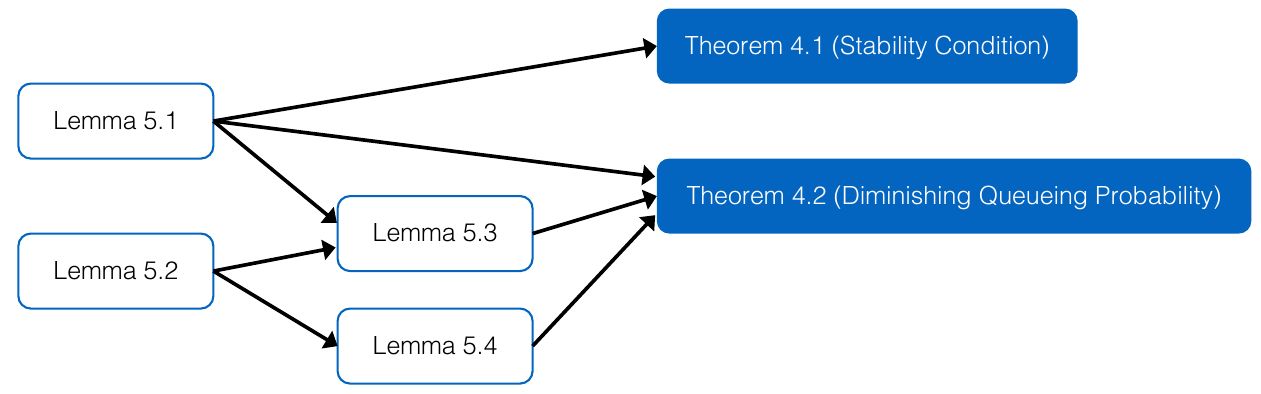}
    \caption{Flowchart of proofs}
    \label{fig:proof-flow-chart}
\end{figure}

\subsubsection*{\textbf{Upper-bounding function on the drift}}
\begin{lemma}\label{lem:g-h}
The drift of $g(\classseq)$ can be bounded as:
\begin{equation}\label{eq:g-h}
\Delta g(\classseq)\le \resourcemax + h(\syssizevec),
\end{equation}
where $\syssizevec=(\syssize_1,\syssize_2,\dots,\syssize_{\numtype})$ with $\syssize_{\type}$ denoting the number of class~$\type$ jobs in system (in queue plus in service), and the function $h$ is defined as:
\begin{equation}\label{eq:h}
h(\syssizevec)=
\begin{cases}
\numserver\loadup-\sum_{\type=1}^{\numtype}\resourceup_{\type}\syssize_{\type}, & \text{if }\sum_{\type=1}^{\numtype}\resourceup_{\type}\syssize_{\type}\le \numserver-\resourcemax,\\
-\numserver\left(1-\loadup\right), & \text{if }\sum_{\type=1}^{\numtype}\resourceup_{\type}\syssize_{\type}>\numserver-\resourcemax.
\end{cases}
\end{equation}
\end{lemma}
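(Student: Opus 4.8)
The plan is to compute the drift $\Delta g(\classseq)$ exactly and then lower-bound the number of busy servers using the FCFS structure. First I would enumerate the transitions out of state $\classseq$. Because $g$ depends on the state only through the vector $\syssizevec$ of the number of jobs of each class \emph{in the system}, and not on how those jobs split between service and queue, each transition changes $g$ in a simple way: a class-$\type$ arrival (rate $\lambdasup_\type$) increases $\syssize_\type$ by one and contributes $+\resourceup_\type/\mu_\type$, while a class-$\type$ completion (total rate $z_\type\mu_\type$, where $z_\type$ is the number of class-$\type$ jobs in service) decreases $\syssize_\type$ by one and contributes $-\resourceup_\type/\mu_\type$, \emph{irrespective} of which queued jobs are subsequently admitted into service. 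Summing over classes gives
\[
\Delta g(\classseq)=\sum_{\type=1}^{\numtype}\frac{\lambdasup_\type\resourceup_\type}{\mu_\type}-\sum_{\type=1}^{\numtype}z_\type\resourceup_\type=\numserver\loadup-B,
\]
where I have used $\frac{\lambdasup_\type\resourceup_\type}{\mu_\type}=\numserver\loadup_\type$ and abbreviated $B:=\sum_{\type=1}^{\numtype}z_\type\resourceup_\type$ for the number of busy servers.

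The key structural fact I would establish next concerns FCFS: whenever some job is waiting in the queue, strictly more than $\numserver-\resourcemax$ servers are busy. The reasoning is that under FCFS the set of jobs in service is always a prefix of the arrival order among the jobs present, so the head-of-queue job is the oldest waiting job, and it is waiting \emph{precisely} because the number of idle servers is strictly less than its server need. Since that need is at most $\resourcemax$, the idle count is at most $\resourcemax-1$, hence $B>\numserver-\resourcemax$. Equivalently, whenever $B\le\numserver-\resourcemax$ the queue must be empty and all jobs are in service, i.e.\ $z_\type=\syssize_\type$ for every $\type$.

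With these two ingredients, the two branches of $h$ follow by bounding $B$. In the first branch, $\sum_{\type}\resourceup_\type\syssize_\type\le\numserver-\resourcemax$: since $B\le\sum_{\type}\resourceup_\type\syssize_\type\le\numserver-\resourcemax$, the structural fact forces all jobs into service, so $B=\sum_{\type}\resourceup_\type\syssize_\type$ and therefore $\Delta g=\numserver\loadup-\sum_{\type}\resourceup_\type\syssize_\type=h(\syssizevec)\le\resourcemax+h(\syssizevec)$. In the second branch, $\sum_{\type}\resourceup_\type\syssize_\type>\numserver-\resourcemax$: I would argue $B\ge\numserver-\resourcemax$ in either case, since if some job waits then $B>\numserver-\resourcemax$ by the structural fact, and if the queue is empty then $B=\sum_{\type}\resourceup_\type\syssize_\type>\numserver-\resourcemax$. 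Substituting $B\ge\numserver-\resourcemax$ into $\Delta g=\numserver\loadup-B$ yields $\Delta g\le\numserver\loadup-(\numserver-\resourcemax)=\resourcemax-\numserver(1-\loadup)=\resourcemax+h(\syssizevec)$.

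I expect the main obstacle to be the FCFS structural fact rather than the drift computation. One must justify carefully that, at every instant, the jobs in service form a prefix of the arrival order among present jobs, so that a nonempty queue genuinely implies the head-of-queue job cannot fit, and that the blocking job's server need is bounded by $\resourcemax$. Everything else is bookkeeping; the reason the drift comes out so clean is the deliberate choice to let $g$ depend only on $\syssizevec$, which makes a completion contribute a fixed $-\resourceup_\type/\mu_\type$ regardless of the ``jitter'' caused by batch admissions from the queue.
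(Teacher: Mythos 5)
Your proposal is correct and follows essentially the same route as the paper's proof: the identical exact drift computation $\Delta g = \numserver\loadup - \sum_{\type}\resourceup_{\type}\cdot(\text{number of class-}\type\text{ jobs in service})$, followed by the same two-case analysis in which head-of-queue blocking guarantees that a nonempty queue leaves fewer than $\resourcemax$ servers idle, hence more than $\numserver-\resourcemax$ busy. The only cosmetic difference is that you isolate the blocking property as a standalone structural fact (via the FCFS prefix observation), which the paper invokes inline within the case analysis.
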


\begin{proof}
Recall that $g(\classseq)=\sum_{\type=1}^{\numtype}\frac{\resourceup_{\type}\syssize_{\type}}{\mu_{\type}}$.  Let $\queue_{\type}$ be the corresponding number of class~$\type$ jobs in the queue, with the understanding that $\queue_{\type}$ is a function of the state $\classseq$.
When there is a class~$\type$ arrival, which happens at a transition rate of $\lambdasup_{\type}$, the value of $\syssize_{\type}$ increases by $1$; when a class~$\type$ job departs, which happens at a transition rate of $\mu_{\type}(\syssize_{\type}-\queue_{\type})$, the value of $\syssize_{\type}$ decreases by $1$.  Therefore, the drift of $g(\classseq)$ can be written as follows:

\begin{align*}
\Delta g(\classseq)&=\sum_{\type=1}^{\numtype}\lambdasup_{\type}
\cdot\frac{\resourceup_{\type}}{\mu_{\type}}-\sum_{\type=1}^{\numtype}\mu_{\type}(\syssize_{\type}-\queue_{\type})\cdot\left(-\frac{\resourceup_{\type}}{\mu_{\type}}\right).
\end{align*}
Then noticing the relation $\loadup_{\type}=\frac{\lambdasup_{\type}\resourceup_{\type}}{\numserver\mu_{\type}}$, we get
\begin{align}
\Delta g(\classseq)
&= \numserver\loadup-\sum_{\type=1}^{\numtype}\resourceup_{\type}(\syssize_{\type}-\queue_{\type}).\label{eq:drift-3}
\end{align}
Consider the term $\sum_{\type=1}^{\numtype}\resourceup_{\type}(\syssize_{\type}-\queue_{\type})$ in \eqref{eq:drift-3}.  It is easy to see that when $\sum_{\type=1}^{\numtype}\resourceup_{\type}\syssize_{\type}\le \numserver-\resourcemax$, which corresponds to the \emph{first case} in the definition of $h$ in \eqref{eq:h},
there must be no jobs in the queue, and thus $\sum_{\type=1}^{\numtype}\resourceup_{\type}(\syssize_{\type}-\queue_{\type})=\sum_{\type=1}^{\numtype}\resourceup_{\type}\syssize_{\type}$.
When $\sum_{\type=1}^{\numtype}\resourceup_{\type}\syssize_{\type}>\numserver-\resourcemax$, which corresponds to the \emph{second case} in the definition of $h$ in \eqref{eq:h}, either there are no jobs in the queue and thus $\sum_{\type=1}^{\numtype}\resourceup_{\type}(\syssize_{\type}-\queue_{\type})=\sum_{\type=1}^{\numtype}\resourceup_{\type}\syssize_{\type}>\numserver-\resourcemax$, or the number of idle servers is not enough to absorb the job at the head of the queue and thus $\sum_{\type=1}^{\numtype}\resourceup_{\type}(\syssize_{\type}-\queue_{\type})>\numserver-\resourcemax$.  In both scenarios, $\Delta g(\classseq)\le \numserver\loadup-\numserver+\resourcemax= h(\syssizevec)+\resourcemax$, which completes the proof of Lemma~\ref{lem:g-h}.
\end{proof}

\subsubsection*{\textbf{Proof of Theorem~\ref{thm:stability} (Stability Condition)}}

We invoke the Foster-Lyapunov criteria \cite{SriYin_14} to show that the Markov chain $\stateup$ is positive recurrent.
Let $\mathcal{B}=\left\{\classseq\colon \sum_{\type=1}^{\numtype}\resourceup_{\type}\syssize_{\type}\le \numserver-\resourcemax\right\}$.  Then clearly $\mathcal{B}\subseteq \mU$ is a finite set. The drift upper bound in Lemma~\ref{lem:g-h} implies that:
\begin{itemize}[leftmargin=1.7em]
\item $\Delta g(\classseq) \le \numserver\loadup + \resourcemax < +\infty$ when $\classseq\in\mathcal{B}$;
\item $\Delta g(\classseq) \le -\numserver\left(1-\loadup\right)+\resourcemax < 0$ when $\classseq\notin \mathcal{B}$,
\end{itemize}
where the second item follows from the assumption that $\loadup < 1-\frac{\resourcemax}{\numserver}$.
Therefore, by the Foster-Lyapunov theorem, the Markov chain $\stateup$ is positive recurrent. \qed

\subsubsection*{\textbf{Proof of Theorem~\ref{thm:diminishing-queueing} (Diminishing Queueing Probability)}}
The proof of Theorem~\ref{thm:diminishing-queueing} is based on two more lemmas (Lemmas~\ref{lem:g-bound} and \ref{lem:ssc}) besides Lemma~\ref{lem:g-h}.  To keep it clear and short, we defer the proofs of Lemmas~\ref{lem:g-bound} and \ref{lem:ssc} until the end.

First we note that $\expect\left[\Delta g\left(\stateup(\infty)\right)\right]=0$ since $\expect\left[g\left(\stateup(\infty)\right)\right]<+\infty$ by Lemma~\ref{lem:g-bound}.  Then utilizing $\expect\left[\Delta g\left(\stateup(\infty)\right)\right]=0$,
we can derive an upper bound on $\Pr\left(\sum_{\type=1}^{\numtype}\resourceup_{\type}\numjobup_{\type}(\infty)> \numserver-\resourcemax\right)$, where recall that the queueing probability $\PrQup\le \Pr\left(\sum_{\type=1}^{\numtype}\resourceup_{\type}\numjobup_{\type}(\infty)> \numserver-\resourcemax\right)$. By the drift upper bound $\Delta g(\classseq)\le h(\syssizevec)+\resourcemax$ in Lemma~\ref{lem:g-h}, we have
\begin{equation*}
\expect\left[h\left(\numjobvecup(\infty)\right)\right]\ge \expect\left[\Delta g\left(\stateup(\infty)\right)\right]-\resourcemax=-\resourcemax.
\end{equation*}
Since $\left[h\left(\numjobvecup(\infty)\right)\right]$ can be written as follows, by its construction:
\begin{align*}
\expect\left[h\left(\numjobvecup(\infty)\right)\right]
&=\expect\left[\left(\numserver\loadup-\sum_{\type=1}^{\numtype}\resourceup_{\type}\numjobup_{\type}(\infty)\right)\cdot \mathds{1}_{\left\{\sum_{\type=1}^{\numtype}\resourceup_{\type}\numjobup_{\type}(\infty)\le\numserver-\resourcemax\right\}}\right]\\
&\mspace{22mu}-\numserver\left(1-\loadup\right)\Pr\left(\sum_{\type=1}^{\numtype}\resourceup_{\type}\numjobup_{\type}(\infty)>\numserver-\resourcemax\right),
\end{align*}
it follows that
\begin{align}
&\mspace{22mu}\Pr\left(\sum_{\type=1}^{\numtype}\resourceup_{\type}\numjobup_{\type}(\infty)>\numserver-\resourcemax\right)\nonumber\\
&\le \frac{\resourcemax}{\numserver\left(1-\loadup\right)}
+\frac{1}{\numserver\left(1-\loadup\right)}\expect\left[\left(\numserver\loadup-\sum_{\type=1}^{\numtype}\resourceup_{\type}\numjobup_{\type}(\infty)\right)\cdot \mathds{1}_{\left\{\sum_{\type=1}^{\numtype}\resourceup_{\type}\numjobup_{\type}(\infty)\le\numserver-\resourcemax\right\}}\right].\nonumber\\
&\le \frac{\resourcemax}{\numserver\left(1-\loadup\right)}
+\frac{1}{\numserver\left(1-\loadup\right)}\sum_{\type=1}^{\numtype}\expect\left[\left(\numserver\loadup_{\type}-\resourceup_{\type}\numjobup_{\type}(\infty)\right)^+\right],\label{eq:upper-3}
\end{align}
where $(a)^+$ for a real number $a\in\mathbb{R}$ denotes $\max\{a,0\}$.

We then bound each summand in the second term in \eqref{eq:upper-3} using Lemma~\ref{lem:ssc}, which asserts that $\expect\left[\left(\numserver\loadup_{\type}-\resourceup_{\type}\numjobup_{\type}(\infty)\right)^+\right]
\le 3\sqrt{\numserver \resourceup_{\type}}$.  Intuitively, Lemma~\ref{lem:ssc} says that the number of class~$\type$ jobs, $\numjobup_{\type}(\infty)$, cannot be much smaller than $\frac{\numserver\loadup_{\type}}{\resourceup_{\type}}=\frac{\lambdasup_{\type}}{\mu_{\type}}$.  To see this, consider the scenario when the number of class~$\type$ jobs is smaller than $\frac{\lambdasup_{\type}}{\mu_{\type}}$.  Then the departure rate of class~$\type$ jobs will definitely be much smaller than $\mu_{\type}\cdot\frac{\lambdasup_{\type}}{\mu_{\type}}=\lambdasup_{\type}$, i.e., the departure rate is smaller than the arrival rate for class~$\type$ jobs.  Consequently, the number of class~$\type$ jobs will increase.  The threshold value for the number of class~$\type$ jobs to balance the departure and arrival rates is $\frac{\lambdasup_{\type}}{\mu_{\type}}$.
Conceptually, Lemma~\ref{lem:ssc} is similar to the ``state-space collapse'' type of results in the literature, since it can be interpreted as a concentration of $\numjobup_{\type}(\infty)$ around the subset of values no smaller than $\frac{\lambdasup_{\type}}{\mu_{\type}}$.

With this upper bound given by Lemma~\ref{lem:ssc}, \eqref{eq:upper-3} can be further upper bounded as:
\begin{align}
\Pr\left(\sum_{\type=1}^{\numtype}\resourceup_{\type}\numjobup_{\type}(\infty)>\numserver-\resourcemax\right)
&\le\frac{\resourcemax}{\numserver\left(1-\loadup\right)}
+\frac{1}{\numserver\left(1-\loadup\right)}\cdot 3\numtype\sqrt{\numserver\resourcemax}\nonumber\\
&= \frac{1}{1-\loadup}\left(3\numtype\sqrt{\frac{\resourcemax}{\numserver}}+\frac{\resourcemax}{\numserver}\right).\nonumber
\end{align}
Then the diminishing queueing probability result in \eqref{eq:queueing-prob-diminish} follows immediately.  This completes the proof of Theorem~\ref{thm:diminishing-queueing}. \qed

\subsubsection*{\textbf{Lemmas~\ref{lem:g-bound} and \ref{lem:ssc} (needed in the proof of Theorem~\ref{thm:diminishing-queueing})}}

Before we present Lemmas~\ref{lem:g-bound} and \ref{lem:ssc}, we first state Lemma~\ref{lem:bounds-drift} below, which is the tool we use in the proofs of Lemmas~\ref{lem:g-bound} and \ref{lem:ssc} (see the flowchart in Figure~\ref{fig:proof-flow-chart}).  Lemma~\ref{lem:bounds-drift} is a well-known result that bounds tail probabilities and moments using drift conditions \cite{Haj_82,BerGamTsi_01,WanMagSri_18}.  We include it here for completeness, and the form below slightly generalizes the commonly used form in the literature.
\begin{lemma}[Bounds via Drift]\label{lem:bounds-drift}
Let $\{S(t),t\ge 0\}$ be a continuous-time Markov chain on a countable state space $\mathcal{S}$ and $r_{s,s'}$ be its transition rate from state $s$ to state $s'$.  Assume that it has a unique stationary distribution and let $S(\infty)$ be a random element that follows the stationary distribution.  Let $V\colon \mathcal{S}\rightarrow\mathbb{R}_+$ be a Lyapunov function. Suppose
\begin{equation*}
v_{\max}:=\sup_{s,s'\colon r_{s,s'}>0}|V(s')-V(s)|<+\infty,\qquad \sup_{s}\sum_{s'}r_{s,s'}<+\infty,
\end{equation*}
and let
\begin{equation*}
\delta_{\max}=\sup_{s}\sum_{s'\colon V(s')>V(s)}r_{s,s'}|V(s')-V(s)|.
\end{equation*}
Suppose that there exist $B>0$ and $\gamma>0$ such that for any $s$ with $V(s)>B$,
\begin{equation*}
\Delta V(s)\le -\gamma.
\end{equation*}
Then for all nonnegative integers $m$,
\begin{equation*}
\Pr\Big(V\big(S(\infty)\big)>B+2mv_{\max}\Big)\le\left(\frac{\delta_{max}}{\delta_{max}+\gamma}\right)^{m+1}.
\end{equation*}
Further,
\begin{equation*}
\expect\Big[V\big(S(\infty)\big)\Big]\le B+\frac{2v_{\max}\delta_{\max}}{\gamma}.
\end{equation*}
\end{lemma}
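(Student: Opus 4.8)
The plan is to first reduce the moment bound to the tail bound, and then derive the tail bound from an exponential Lyapunov function. Granting the geometric tail bound $\Pr\!\big(V(S(\infty)) > B + 2mv_{\max}\big)\le \rho^{\,m+1}$ with $\rho := \frac{\delta_{\max}}{\delta_{\max}+\gamma}$, the moment bound is immediate: since $V\ge 0$, $\expect\big[V(S(\infty))\big] = \int_0^\infty \Pr\!\big(V(S(\infty))>t\big)\,dt$, and bounding the integrand on each slab $[B+2mv_{\max},\,B+2(m+1)v_{\max})$ by $\rho^{m+1}$ gives $\expect\big[V(S(\infty))\big]\le B + 2v_{\max}\sum_{m\ge 0}\rho^{m+1} = B + 2v_{\max}\tfrac{\rho}{1-\rho} = B + \frac{2v_{\max}\delta_{\max}}{\gamma}$, using $\tfrac{\rho}{1-\rho}=\tfrac{\delta_{\max}}{\gamma}$. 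So the entire task is the tail bound.

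For the tail bound I would use the exponential test function $W(s)=e^{\theta V(s)}$ with $\theta=\frac{1}{2v_{\max}}\ln\!\big(1+\tfrac{\gamma}{\delta_{\max}}\big)>0$, chosen so that $e^{-2\theta v_{\max}}=\rho$. Writing $\Delta_{s,s'}:=V(s')-V(s)$, note $|\Delta_{s,s'}|\le v_{\max}$ and $\Delta W(s)=W(s)\sum_{s'}r_{s,s'}\big(e^{\theta\Delta_{s,s'}}-1\big)$. I would split the inner sum into upward transitions ($\Delta_{s,s'}>0$) and downward ones ($\Delta_{s,s'}\le 0$), and bound $e^{\theta\Delta}-1$ by its chord on $[0,v_{\max}]$ and on $[-v_{\max},0]$ respectively. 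The upward contribution is then controlled by $\sum_{s':\Delta_{s,s'}>0}r_{s,s'}\Delta_{s,s'}\le\delta_{\max}$, while the net effect is driven negative by the hypothesis $\sum_{s'}r_{s,s'}\Delta_{s,s'}=\Delta V(s)\le -\gamma$ on $\{V>B\}$. The goal of this step is to show that there is a constant $c>0$ with $\Delta W(s)\le -c\,W(s)$ whenever $V(s)>B$, and that $\Delta W(s)$ is bounded above by a finite constant $M$ on $\{V\le B\}$ (here I use $\sup_s\sum_{s'}r_{s,s'}<\infty$, $|\Delta_{s,s'}|\le v_{\max}$, and $W(s)\le e^{\theta(B+v_{\max})}$ on that set). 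Combining, $\Delta W(s)\le -c\,W(s)+M\,\mathbbm{1}_{\{V(s)\le B\}}$ for all $s$.

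Next I would invoke the stationary drift identity $\expect\big[\Delta W(S(\infty))\big]=0$, justified by the finite total transition rate and the bounded jumps (via Dynkin's formula and stationarity). Taking expectations in the global drift bound gives $0\le -c\,\expect\big[W(S(\infty))\big]+M$, hence $\expect\big[W(S(\infty))\big]\le M/c$. Markov's inequality applied to $W$ then yields $\Pr\!\big(V(S(\infty))>B+2mv_{\max}\big)=\Pr\!\big(W(S(\infty))>e^{\theta(B+2mv_{\max})}\big)\le \frac{M}{c}\,e^{-\theta B}\,e^{-2\theta v_{\max}m}=\frac{M}{c}e^{-\theta B}\rho^{m}$, and I would absorb the prefactor into one extra factor of $\rho$ to reach the stated $\rho^{m+1}$.

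The main obstacle is the generator estimate of the second step: establishing $\Delta W(s)\le -c\,W(s)$ for every $V(s)>B$ with a rate compatible with $\rho=\frac{\delta_{\max}}{\delta_{\max}+\gamma}$. The difficulty is the \emph{asymmetry} between jump directions — the downward transitions that supply the stabilizing drift may have arbitrarily large total rate, so they cannot be bounded by $\delta_{\max}$; one must bound $e^{\theta\Delta}-1$ differently in the two directions, using only the upward quantity $\delta_{\max}$ together with the net drift $-\gamma$, and then balance them through the choice of $\theta$. This balancing is exactly what forces the factor $\tfrac{1}{2v_{\max}}$ in $\theta$ and produces the spacing $2v_{\max}$ and the exponent $m+1$. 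Making the constants land precisely on $\rho^{m+1}$, rather than on a generic $\Theta(\rho^{m})$ geometric decay, is the delicate bookkeeping and the crux of the argument.
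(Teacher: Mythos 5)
The paper never proves Lemma~\ref{lem:bounds-drift}: it is stated as a known result, with the proof deferred to the cited references \cite{Haj_82,BerGamTsi_01,WanMagSri_18}, so your attempt must be compared against the standard arguments there. Your reduction of the moment bound to the tail bound is correct (and matches the constant, via $\rho/(1-\rho)=\delta_{\max}/\gamma$), and your generator estimate for $W=e^{\theta V}$ does go through: with $u=e^{\theta v_{\max}}$, the two chord bounds plus $N\ge P+\gamma$ (where $P\le\delta_{\max}$ is the rate-weighted upward variation and $N$ the downward one) give $\Delta W(s)\le \frac{u-1}{v_{\max}u}\bigl(\delta_{\max}(u-1)-\gamma\bigr)W(s)$ for $V(s)>B$, which with your choice $u=\sqrt{1+\gamma/\delta_{\max}}$ equals $-\frac{\delta_{\max}(u-1)^2}{v_{\max}}W(s)$. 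But the step you defer as ``delicate bookkeeping''---absorbing the Markov-inequality prefactor into one factor of $\rho$---is a genuine gap, not bookkeeping. Your bound is $\Pr\bigl(V(S(\infty))>B+2mv_{\max}\bigr)\le \frac{M}{c}e^{-\theta B}\rho^{m}$ with $\frac{M}{c}e^{-\theta B}=1+\frac{Rv_{\max}}{\delta_{\max}(u-1)}$, where $R:=\sup_s\sum_{s'}r_{s,s'}$; this prefactor exceeds $1$ and grows with $R$, whereas the target $\rho^{m+1}$ is uniform in $R$. Worse, the route is intrinsically blocked at $m=0$: since $V\ge 0$ we have $W\ge 1$ pointwise, so $\expect[W(S(\infty))]\ge 1$ and Markov can never certify anything better than $e^{-\theta B}=\rho^{B/(2v_{\max})}$, which is strictly larger than the claimed bound $\rho$ whenever $B<2v_{\max}$ (a perfectly admissible case). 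A smaller but real issue: invoking $\expect[\Delta W(S(\infty))]=0$ presupposes $\expect[W(S(\infty))]<\infty$, which must be established first (note the paper itself is careful about exactly this for $g$, proving Lemma~\ref{lem:g-bound} before using the zero-drift identity); this is fixable by a truncation argument but is not covered by your appeal to Dynkin's formula.

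The proofs in the cited references achieve the exact constant by avoiding the global exponential moment altogether and running a level-by-level rate-balance recursion. Apply the stationary zero-drift identity to $f_a(s)=\bigl(V(s)-a\bigr)^+$ (truncated at a high level and passing to the limit, which handles integrability) at the levels $a=B+(2m+1)v_{\max}$. Because jumps are bounded by $v_{\max}$, positive contributions to $\Delta f_a(s)$ arise only from states in the band $a-v_{\max}<V(s)\le a+v_{\max}$ and total at most $\delta_{\max}$ each, while every state with $V(s)>a+v_{\max}$ satisfies $\Delta f_a(s)=\Delta V(s)\le-\gamma$. Writing $P_m=\Pr\bigl(V(S(\infty))>B+2mv_{\max}\bigr)$, this yields $\gamma P_{m+1}\le \delta_{\max}(P_m-P_{m+1})$, i.e.\ $P_{m+1}\le\rho P_m$, and the same identity at $a=B-v_{\max}$ gives the base case $\gamma P_0\le\delta_{\max}(1-P_0)$, i.e.\ $P_0\le\rho$; together $P_m\le\rho^{m+1}$, with no dependence on $R$. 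To repair your write-up, replace the $e^{\theta V}$/Markov step with this recursion; your moment-bound reduction can then be kept verbatim.
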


Now we are ready for Lemmas~\ref{lem:g-bound} and \ref{lem:ssc} and their proofs.
\begin{lemma}\label{lem:g-bound}
The Lyapunov function $g$ defined in \eqref{eq:g} satisfies
\begin{equation*}
\expect\left[g\left(\stateup(\infty)\right)\right] < +\infty.
\end{equation*}
\end{lemma}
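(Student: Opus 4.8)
The plan is to apply Lemma~\ref{lem:bounds-drift} (Bounds via Drift) directly with the Lyapunov function $V=g$ and the Markov chain $S=\stateup$. That lemma's second conclusion gives $\expect\left[g\left(\stateup(\infty)\right)\right]\le B+\frac{2v_{\max}\delta_{\max}}{\gamma}$, so once I verify its four hypotheses---finiteness of $v_{\max}$, of the total outflow rate $\sup_s\sum_{s'}r_{s,s'}$, and of $\delta_{\max}$, together with a uniform negative drift outside a sublevel set of $g$---the right-hand side is automatically finite (for each fixed $\numserver$) and the lemma follows.

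First I would check the three finiteness conditions. Every transition of $\stateup$ is either a single arrival or a single departure, each of which changes exactly one coordinate $\syssize_\type$ by $\pm 1$ and hence changes $g$ by $\resourceup_\type/\mu_\type$; therefore $v_{\max}=\max_{1\le\type\le\numtype}\resourceup_\type/\mu_\type<+\infty$. For the total outflow rate, arrivals contribute $\lambdasup$, and the key observation is that the number of jobs in service is at most $\numserver$, since each job in service occupies at least one server; hence the aggregate departure rate is at most $\numserver\mu_{\max}$, giving $\sup_s\sum_{s'}r_{s,s'}\le\lambdasup+\numserver\mu_{\max}<+\infty$. Finally, the only transitions that \emph{increase} $g$ are arrivals, so $\delta_{\max}=\sum_{\type=1}^{\numtype}\lambdasup_\type\cdot\frac{\resourceup_\type}{\mu_\type}=\numserver\loadup$, which is finite and independent of the state.

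Next I would supply the negative-drift condition using Lemma~\ref{lem:g-h}. In its second case, namely $\sum_{\type=1}^{\numtype}\resourceup_\type\syssize_\type>\numserver-\resourcemax$, the drift obeys $\Delta g(\classseq)\le\resourcemax-\numserver(1-\loadup)$, which equals $-\gamma$ for $\gamma:=\numserver(1-\loadup)-\resourcemax>0$, where positivity is exactly the stability hypothesis $\loadup<1-\resourcemax/\numserver$. To rewrite this region as a sublevel threshold on $g$, let $\mu_{\min}:=\min_{1\le\type\le\numtype}\mu_\type$ and note $g(\classseq)=\sum_\type\resourceup_\type\syssize_\type/\mu_\type\le\frac{1}{\mu_{\min}}\sum_\type\resourceup_\type\syssize_\type$, so that $g(\classseq)>B:=(\numserver-\resourcemax)/\mu_{\min}$ implies $\sum_\type\resourceup_\type\syssize_\type>\numserver-\resourcemax$ and hence $\Delta g(\classseq)\le-\gamma$. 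Applying Lemma~\ref{lem:bounds-drift} with these $B$ and $\gamma$ then yields the finite moment bound.

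The bookkeeping is routine; the one point requiring care is the uniform bound on the total transition rate out of a state. Without the physical constraint that at most $\numserver$ jobs can be in service simultaneously, the departure rate---and therefore $\sup_s\sum_{s'}r_{s,s'}$---might appear unbounded, which would block the application of Lemma~\ref{lem:bounds-drift}; invoking the ``each job occupies at least one server'' bound is what makes this clean. I would also emphasize that the resulting bound depends on $\numserver$, which is harmless here, since the lemma is only needed to justify $\expect\left[\Delta g\left(\stateup(\infty)\right)\right]=0$ in the proof of Theorem~\ref{thm:diminishing-queueing}.
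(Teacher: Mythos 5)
Your proposal is correct and follows essentially the same route as the paper: apply Lemma~\ref{lem:bounds-drift} to $g$ with $B=\frac{\numserver-\resourcemax}{\mu_{\min}}$, deduce the negative drift $-\gamma=-\bigl(\numserver(1-\loadup)-\resourcemax\bigr)$ from Lemma~\ref{lem:g-h} via the implication $g(\classseq)>B\Rightarrow\sum_{\type}\resourceup_{\type}\syssize_{\type}>\numserver-\resourcemax$, and plug in the same $v_{\max}=\max_{\type}\resourceup_{\type}/\mu_{\type}$ and $\delta_{\max}=\numserver\loadup$. Your only addition is the explicit verification of the bounded-total-transition-rate hypothesis (using that at most $\numserver$ jobs can be in service, so the outflow rate is at most $\lambdasup+\numserver\mu_{\max}$), a condition the paper leaves implicit, and your check is correct.
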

\begin{proof}
We prove Lemma~\ref{lem:g-bound} by applying Lemma~\ref{lem:bounds-drift} to the Lyapunov function $g$. Recall that $g(\classseq)=\sum_{\type=1}^{\numtype}\frac{\resourceup_{\type}\syssize_{\type}}{\mu_{\type}}$, where we use $\classseq$ to denote a state and $\syssizevec=(\syssize_1,\syssize_2,\dots,\syssize_{\numtype})$ to denote the corresponding numbers of jobs in the system of each class.

Let $B=\frac{\numserver-\resourcemax}{\mu_{\min}}$.  We bound the drift $\Delta g(\classseq)$ when $g(\classseq)>B$ using Lemma~\ref{lem:g-h}.
When $g(\classseq)>B$, we have
\begin{align*}
\sum_{\type=1}^{\numtype}\resourceup_{\type}\syssize_{\type}  \geq \mu_{\min}\cdot\sum_{\type=1}^{\numtype}\frac{\resourceup_{\type}\syssize_{\type}}{\mu_{\type}} =\mu_{\min}\cdot g(\classseq)> \numserver-\resourcemax.
\end{align*}
Then Lemma~\ref{lem:g-h} implies that $\Delta g(\classseq) \le -\numserver\left(1-\loadup\right)+\resourcemax$, which is negative when 
$\loadup < 1-\frac{\resourcemax}{\numserver}$.

Note that when we take the Lyapunov function in Lemma~\ref{lem:bounds-drift} to be $g$, then $v_{\max}=\max_{\type} \left\{\frac{\resourceup_\type}{\mu_\type}\right\}\leq \frac{\resourcemax}{\mu_{\min}}$, and $\delta_{\max}=\sum_{\type =1}^{\numtype}\frac{\resourceup_{\type}\lambdasup_{\type}}{\mu_{\type}}=\numserver\loadup$.  Applying Lemma~\ref{lem:bounds-drift}, we get
\begin{align*}
\expect \left[g\left(\stateup(\infty)\right)\right] 
\le B+ \frac{2v_{\max}\delta_{\max}}{\numserver\left(1-\loadup\right)-\resourcemax}
\le \frac{\numserver-\resourcemax}{\mu_{\min}}+2\cdot\frac{\resourcemax}{\mu_{\min}}\cdot \frac{\numserver\loadup}{\numserver\left(1-\loadup\right)-\resourcemax},
\end{align*}
which is finite, and thus completes the proof of Lemma~\ref{lem:g-bound}.
\end{proof}

\begin{lemma}\label{lem:ssc}
For each class $\type$,
\begin{equation}\label{eq:ssc}
\expect\left[\left(\numserver\loadup_{\type}-\resourceup_{\type}\numjobup_{\type}(\infty)\right)^+\right]
\le 3\sqrt{\numserver \resourceup_{\type}}.
\end{equation}
\end{lemma}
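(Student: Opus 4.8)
The plan is to reduce the claim, up to the factor $\resourceup_\type$, to a bound on the expected shortfall of $\numjobup_\type(\infty)$ below the natural threshold $a_\type:=\lambdasup_\type/\mu_\type$. Since $\numserver\loadup_\type=\lambdasup_\type\resourceup_\type/\mu_\type=\resourceup_\type a_\type$, we have the exact identity $\left(\numserver\loadup_\type-\resourceup_\type\numjobup_\type(\infty)\right)^+=\resourceup_\type\left(a_\type-\numjobup_\type(\infty)\right)^+$. So it suffices to work with the Lyapunov function $V(\classseq)=\left(a_\type-\syssize_\type\right)^+$, where $\syssize_\type$ is the number of class~$\type$ jobs in state $\classseq$ (as in the proof of Lemma~\ref{lem:g-h}), and to show $\resourceup_\type\expect\!\left[V(\stateup(\infty))\right]\le 3\sqrt{\numserver\resourceup_\type}$. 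I would obtain this from the moment bound in Lemma~\ref{lem:bounds-drift} applied to $V$. Note that $V\le a_\type$ is bounded, so all its moments are automatically finite; the only content is the \emph{size} of the bound.

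The crux is to establish a \emph{state-dependent} drift. Computing $\Delta V(\classseq)$: a class~$\type$ arrival (rate $\lambdasup_\type$) lowers $V$ by $1$ and a class~$\type$ departure (rate $\mu_\type(\syssize_\type-\queue_\type)$, where $\queue_\type$ is the number of class~$\type$ jobs in queue) raises $V$ by $1$, provided $V(\classseq)\ge 1$ so the boundary at $0$ is not crossed. The key inequality is that the number of class~$\type$ jobs in service never exceeds $\syssize_\type$, so the departure rate is at most $\mu_\type\syssize_\type$; hence for $V(\classseq)\ge 1$, $\Delta V(\classseq)=-\lambdasup_\type+\mu_\type(\syssize_\type-\queue_\type)\le -\lambdasup_\type+\mu_\type\syssize_\type=-\mu_\type\left(a_\type-\syssize_\type\right)=-\mu_\type V(\classseq)$. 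This \emph{linear, mean-reverting} drift is exactly what produces fluctuations of order $\sqrt{a_\type}$ rather than of order $a_\type$. I would stress that head-of-line blocking, which can force class~$\type$ jobs to wait even when $\syssize_\type$ is small, only \emph{reduces} the departure rate and therefore only strengthens the drift, so blocking is harmless here.

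To feed this linear drift into the constant-$\gamma$ form of Lemma~\ref{lem:bounds-drift}, I would choose the threshold and the slack to both scale like $\sqrt{a_\type}$: take $B=\sqrt{a_\type}$ and $\gamma=\mu_\type\sqrt{a_\type}$. Then for any state with $V(\classseq)>B$ we have $V(\classseq)\ge 1$ (assuming $a_\type\ge 1$; the case $a_\type<1$ is immediate, since then $\resourceup_\type\expect[V]\le\resourceup_\type a_\type=\numserver\loadup_\type\le\sqrt{\numserver\resourceup_\type}$), so $\Delta V(\classseq)\le-\mu_\type V(\classseq)<-\mu_\type\sqrt{a_\type}=-\gamma$, as required. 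The remaining quantities are $v_{\max}\le 1$, since $\syssize_\type$ and hence $V$ change by at most $1$ per transition, and $\delta_{\max}\le\lambdasup_\type$, since the only upward jumps of $V$ are class~$\type$ departures, which occur at rate $\mu_\type(\syssize_\type-\queue_\type)\le\mu_\type\syssize_\type\le\lambdasup_\type$ in the region $\syssize_\type\le a_\type$ where such a departure can actually raise $V$ (the marginal states $\syssize_\type\in(a_\type,a_\type+1)$ need a short separate check, which still gives $\le\lambdasup_\type$). Plugging in, $\expect[V(\stateup(\infty))]\le B+\frac{2v_{\max}\delta_{\max}}{\gamma}\le\sqrt{a_\type}+\frac{2\lambdasup_\type}{\mu_\type\sqrt{a_\type}}=\sqrt{a_\type}+2\sqrt{a_\type}=3\sqrt{a_\type}$. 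Multiplying by $\resourceup_\type$ and using $a_\type=\numserver\loadup_\type/\resourceup_\type$ together with $\loadup_\type\le\loadup<1$ gives $\resourceup_\type\expect[V]\le 3\sqrt{\resourceup_\type\numserver\loadup_\type}\le 3\sqrt{\numserver\resourceup_\type}$, which is the claim (and explains the constant $3=1+2$).

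The main obstacle, and the genuinely new idea, is the second step: recognizing that the drift of $V$ is proportional to $V$ rather than a fixed negative constant, and then exploiting this by tuning $B$ and $\gamma$ \emph{jointly} at the $\sqrt{a_\type}$ scale inside an otherwise constant-drift bound. A naive constant-drift application (e.g.\ $\gamma=\mu_\type$ valid for all $V\ge 1$) would only yield $\expect[V]=O(a_\type)$, which is off by a square root and far too weak for the diminishing-queueing conclusion; it is precisely the state-dependence of the drift that buys the factor $\sqrt{a_\type}$. The only secondary technical care is verifying the $\delta_{\max}\le\lambdasup_\type$ bound and confirming, as noted above, that head-of-line blocking cannot spoil the drift inequality.
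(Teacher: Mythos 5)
Your proposal is correct and is essentially the paper's own argument: the paper applies Lemma~\ref{lem:bounds-drift} to the Lyapunov function $f = \left(\numserver\loadup_{\type}-\resourceup_{\type}\syssize_{\type}\right)^+$, which is exactly $\resourceup_{\type}$ times your $V$, with the same square-root tuning $B=\sqrt{\numserver\resourceup_{\type}}$ and $\gamma=\mu_{\type}B$ extracted from the same linear mean-reverting drift $\Delta f\le-\mu_{\type}f$ (there $v_{\max}=\resourceup_{\type}$ and $\delta_{\max}\le\numserver\mu_{\type}$, matching your $v_{\max}\le 1$, $\delta_{\max}\le\lambdasup_{\type}$ after rescaling), yielding the identical constant $3=1+2$. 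The only cosmetic differences are your normalization by $\resourceup_{\type}$ and a slightly tighter $\delta_{\max}$ bound, neither of which changes the argument.
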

\begin{proof}
We prove Lemma~\ref{lem:ssc} by applying Lemma~\ref{lem:bounds-drift} to the Lyapunov function $f\colon \mU\rightarrow\mathbb{R}_+$ defined as
\begin{equation*}
f(\classseq)=\left(\numserver\loadup_{\type}-\resourceup_{\type}\syssize_{\type}\right)^+,
\end{equation*}
where recall that we use $\classseq$ to denote a state and $\syssizevec=(\syssize_1,\syssize_2,\dots,\syssize_{\numtype})$ to denote the corresponding numbers of jobs in the system of each class.  Then \eqref{eq:ssc} is equivalent to $\expect\left[f\left(\stateup(\infty)\right)\right]\le 3\sqrt{\numserver \resourceup_{\type}}$.

Let $B=\sqrt{\numserver\resourceup_{\type}}$.  Then we bound the drift $\Delta f(\classseq)$ when $f(\classseq)>B$. The condition that $f(\classseq)>B$ guarantees that $f(\classseq)=\numserver\loadup_{\type}-\resourceup_{\type}\syssize_{\type}>B\ge \resourceup_{\type}$. Therefore, the drift $\Delta f(\classseq)$ can be bounded as follows:
\begin{align*}
\Delta f(\classseq)
& = \lambdasup_{\type}\cdot\left(-\resourceup_{\type}\right) +\mu_{\type}(\syssize_{\type}-\queue_{\type})\cdot \resourceup_{\type}\\
&\leq-\mu_{\type}\left(\numserver\loadup_{\type}-\resourceup_{\type}\syssize_{\type}\right)\\
&< -\mu_{\type} B,
\end{align*}
where the last line follows from the condition that $f(\classseq)>B$. Thus, the $\gamma$ in Lemma~\ref{lem:bounds-drift} equals $\mu_{\type}B$.

Note that when we take the Lyapunov function in Lemma~\ref{lem:bounds-drift} to be $f$, then $v_{\max}=\resourceup_{\type}$, and $\delta_{\max}$
satisfies $\delta_{\max}\le \frac{\numserver\mu_{\type}}{\resourceup_{\type}}\cdot \resourceup_{\type}=\numserver\mu_{\type}$, since $f$ increases when there is a departure of class~$\type$ jobs.
Applying Lemma~\ref{lem:bounds-drift}, we get 
\begin{align*}
\expect\left[f\left(\stateup(\infty)\right)\right]
&\le B+\frac{2\numserver\resourceup_{\type}\mu_{\type}}{\mu_{\type}B} = 3\sqrt{\numserver\resourceup_{\type}},
\end{align*}
which proves the bound \eqref{eq:ssc} in Lemma~\ref{lem:ssc}.
\end{proof}

\section{Proof for Transient Analysis}\label{sec:proof-limiting-distr}
In this section, we focus on analyzing the transient behavior of the number of jobs of each class (Theorem~\ref{THM:FINITE_TIME}). 
Our analysis is motivated by the approach of \emph{fluid approximation}, where an appropriately scaled system process can be approximated, as the number of servers grows large, by a deterministic system that is defined by a system of differential equations. 
Such a deterministic system is referred to as \emph{fluid model} in the literature. 
In particular, our proof is closely related to the argument for a result called Kurtz's theorem on density-dependent Markov processes (see Chapter $8$ of \cite{Kur_81} or Chapter $5.3$ of \cite{DraMas_09}). However, we remark that there are two key differences between our proof and the standard argument. First, we scale the number of jobs of each class by the corresponding arrival rate; in contrast, traditional fluid approximation considers scaling by the number of servers. The traditional fluid scaling does not work for a system with multi-server jobs, where server needs can potentially grow with the number of servers $n$.
Second, because we have multiple job classes, instead of directly comparing our original system with the fluid model, we need to construct an intermediate system and couple it with our original system. We establish the fluid approximation by showing that the intermediate system is close to both the original system and the fluid model.

We first introduce the fluid model and its properties in Section~\ref{subsec:ODE}, and then prove Theorem~\ref{THM:FINITE_TIME} in Section~\ref{subsec:finite_time}. 
For ease of exposition, we denote the scaled number of class $\type$ jobs in the system at time $t$ by 
\begin{align*}
    \scalenumjobup_\type(t):=\frac{\numjobup_\type(t)}{\lambdasup_\type} .
\end{align*}

\subsection{Fluid model} \label{subsec:ODE}

Recall that the fluid model is defined by the following differential equations:
\begin{equation}\label{eq:ODE}
\dot{\numjoblim}_\type(t) = 1 - \min\left\{\mu_\type\numjoblim_\type(t),\frac{1}{\load_{\type}}\right\},\qquad \type\in\{1,\ldots,\numtype\}.
\end{equation}
For each $\type\in \{1,\ldots,\numtype\}$, given an initial condition 
satisfying $\numjoblim_\type(0)\le \frac{1}{\mu_\type\load_\type}$, it is not hard to see that the unique solution to the differential equation~\eqref{eq:ODE} is given by
\begin{align}\label{eq:solution_ODE_1}
\numjoblim_\type(t)=\left(\numjoblim_\type(0)-\frac{1}{\mu_\type}\right)e^{-\mu_\type t}+\frac{1}{\mu_\type}.
\end{align}


Observe that starting from any initial condition $\numjoblimvec(0)$ satisfying $0\leq \numjoblim_\type(0)\leq  \frac{1}{\mu_\type\load_\type}$ for all $\type$, as $t\rightarrow \infty,$ the system $\numjoblimvec(t)$ converges to the following equilibrium point 
$$\numjoblimvec(\infty):=\Big(\frac{1}{\mu_1}, \frac{1}{\mu_2},\ldots,\frac{1}{\mu_\numtype}\Big).$$

\emph{Interpretation of the differential equations.} Consider a system where each class of jobs has access to a separate set of $\numserver$ servers and is served in FCFS order, i.e., class $\type$ jobs run as an M/M/$\frac{\numserver}{\resourceup_\type}$ queue with arrival rate $\lambdasup_\type$ and service rate $\mu_\type$. For simplicity, assume that $n/\resourceup_\type$ is an integer. 
Let $Y^{(n)}_\type(t)$ denote the number of jobs in this system at time $t$. 

We can view a solution to the fluid model~(\ref{eq:ODE}), $\numjoblim_\type(t)$, as a deterministic approximation to the sample paths of $\frac{Y^{(n)}_\type(t)}{\lambdasup_\type}$ for a large $n.$
Note that  $\frac{Y^{(n)}_\type(t)}{\lambdasup_\type}$ decreases by $\frac{1}{\lambdasup_\type}$ at rate $\min\big\{\mu_\type Y^{(n)}_\type(t), \mu_\type\frac{n}{\resourceup_\type} \big\}$, due to the completion of a job. Multiplying the departure rate by the decrease due to a departure, and taking the limit $n\rightarrow\infty,$  we obtain the second drift term $\min\big\{\mu_\type\numjoblim_\type(t),\frac{1}{\load_{\type}}\big\}$ in (\ref{eq:ODE}). The first drift term, $1$, corresponds to arrivals, as $\frac{Y^{(n)}_\type(t)}{\lambdasup_\type}$ increases by $\frac{1}{\lambdasup_\type}$ at rate $\lambdasup_\type.$

\subsection{Proof outline of Theorem~\ref{THM:FINITE_TIME}} \label{subsec:finite_time}

Here we provide a proof outline of Theorem~\ref{THM:FINITE_TIME} and highlight the difference from the standard argument for Kurtz's theorem. We defer the detailed proof to Appendix~\ref{proof_finite_time}. 

As we mentioned earlier, a key difference between our proof and the standard argument is the construction of an intermediate system. 
In particular, we consider the system $\{\numjobavecup(t)\}$ constructed in Section~\ref{subsec:ODE}, i.e., $\numjobaup_{\type}(t)$ is the number of jobs in an M/M/$\frac{\numserver}{\resourceup_\type}$ queue at time $t$. We couple this enlarged
system with our system so that they have the same initial state, identical
job arrival sequence and identical job service times. With this coupling,
the system $\{\numjobavecup(t)\}$ is identical to our system (in terms of the number of
present jobs of each class) until the moment when the total number
of servers requested by present jobs exceeds $\numserver$. Specifically,
for any positive time $T$,
we have $\numjobup_{\type}(t)=\numjobaup_{\type}(t)$ for all classes
$\type$ and all $t$ with $0\le t\le T$ if 
\begin{align*}
\sup_{0\le t\le T}\sum_{\type=1}^{\numtype}\resourceup_{\type}\numjobaup_{\type}(t)\le\numserver.
\end{align*}

To establish that the scaled number of jobs in our original system, $\Big\{\frac{1}{\lambdasup_{\type}}\numjobup_{\type}(t)\Big\}$, can be approximated by the fluid model $\numjoblimvec(t)$, it suffices to show that the scaled number of jobs in the intermediate enlarged system,  $\Big\{\frac{1}{\lambdasup_{\type}}\numjobaup_{\type}(t)\Big\}$, is close to both of $\Big\{\frac{1}{\lambdasup_{\type}}\numjobup_{\type}(t)\Big\}$ and the fluid model $\numjoblimvec(t)$. Specifically, note that 
\begin{align}
\Pr\left(\sup_{0\le t\le T}\sum_{\type=1}^{\numtype}\left|\frac{1}{\lambdasup_{\type}}\numjobup_{\type}(t)-\numjoblim_{\type}(t)\right|>\epsilon\right)  & \leq
\sum_{\type=1}^{\numtype}\Pr\left(\sup_{0\le t\le T}\left|\frac{1}{\lambdasup_{\type}}\numjobaup_{\type}(t)-\numjoblim_{\type}(t)\right|>\frac{\epsilon}{\numtype}\right) \label{eq:term-aux} \\
& +\Pr\left(\exists \text{ \ensuremath{\type} and \ensuremath{t} $\in [0,T]$ s.t. } \numjobup_{\type}(t)\neq\numjobaup_{\type}(t)\right)\label{eq:deviate}.
\end{align}

Our proof consists of the following three main steps:
\begin{itemize}[leftmargin=5em]
\item[\textbf{Step 1:}] We show that $\Big\{\frac{1}{\lambdasup_{\type}}\numjobaup_{\type}(t)\Big\}$ is close to the fluid model $\numjoblimvec(t)$. In particular, we upper bound \eqref{eq:term-aux} 
by showing that for each $\type$,
\begin{align}
 & \Pr\left(\sup_{0\le t\le T}\left|\frac{1}{\lambdasup_{\type}}\numjobaup_{\type}(t)-\numjoblim_{\type}(t)\right|>\frac{\epsilon}{\numtype}\right)\nonumber \\
\leq & \Pr\bigg(\left|\frac{1}{\lambdasup_{\type}}\numjobaup_{\type}(0)-\numjoblim_{\type}(0)\right|>\frac{\epsilon e^{-\mu_{\type}T}}{3\numtype} \bigg)+ 2e^{-\lambdasup_{\type}T\cdot h\left(\frac{\epsilon e^{-\mu_{\type}T}}{3\numtype T}\right)}+2e^{-\frac{\numserver\mu_{\type}T}{\resourceup_{\type}}h\left(\frac{\load_{\type}\epsilon e^{-\mu_{\type}T}}{3\numtype T}\right)},\label{eq:bound-aux-n}
\end{align}
using ideas similar to those used in proving Kurtz's theorem~\cite{Kur_81,DraMas_09}.
\item[\textbf{Step 2:}] We next show that $\numjobaup_{\type}(t)$ is close to $\numjobup_{\type}(t)$. Specifically, we upper bound the probability of $\numjobup_{\type}(t)$ deviating from $\numjobaup_{\type}(t)$ in \eqref{eq:deviate} utilizing
the result from Step~1.

\item[\textbf{Step 3:}] We combine the results from Step 1-2 to show that $$\lim_{n\rightarrow \infty}\Pr\left(\sup_{0\le t\le T}\sum_{\type=1}^{\numtype}\left|\frac{1}{\lambdasup_{\type}}\numjobup_{\type}(t)-\numjoblim_{\type}(t)\right|>\epsilon\right)=0,$$ 
thus completing the proof of Theorem~\ref{THM:FINITE_TIME}.
\end{itemize}

\begin{remark} Theorem~\ref{THM:FINITE_TIME}
and the convergence of $\numjoblimvec(t)$ to $\numjoblimvec(\infty)$ give the following result:
\begin{align} \label{convergence_time_n}
    \scalenumjobup(t) \stackrel{d}{\rightarrow} \numjoblimvec(t),\text{ as }n\rightarrow \infty,\qquad\numjoblimvec(t)\rightarrow  \numjoblimvec(\infty),\text{ as }t\rightarrow \infty,
\end{align}
where ``$\stackrel{d}{\rightarrow}$'' denotes convergence in distribution. 
On the other hand, when the system load $\load$ satisfies $\load<1-\frac{\resourcemax}{\numserver}$,
Theorem~\ref{thm:stability} implies that $\scalenumjobvecup$ has a unique stationary distribution. That is, for any $\numserver$,
\begin{align*}
\scalenumjobup(t)\stackrel{d}{\rightarrow} \scalenumjobvecup(\infty),\text{ as }t\rightarrow \infty,
\end{align*}
where $\scalenumjobvecup(\infty)$ is the random variable with the stationary distribution of $\scalenumjobvecup$. 
We may expect that equation~(\ref{convergence_time_n}) still holds if we change the order in which the limits over $t$ and $\numserver$ are taken, i.e.,
\begin{align} \label{convergence_n_time}
   \scalenumjobup(t)\stackrel{d}{\rightarrow} \scalenumjobvecup(\infty),\text{ as }t\rightarrow \infty,\qquad
   \scalenumjobvecup(\infty)\stackrel{d}{\rightarrow}
   \numjoblimvec(\infty),\text{ as }n\rightarrow \infty.
\end{align}
Unfortunately, existing techniques fall short of establishing such an \emph{interchange of limits}. However, we conjecture that  $\{\scalenumjobvecup(\infty)\}_{\numserver}$ converges to  $\numjoblimvec(\infty)$ as $\numserver \rightarrow \infty.$ In particular, our numerical experiments appear to support our conjecture (see Section~\ref{sec:simulations}). We leave as an intriguing open question establishing the convergence of $\{\scalenumjobvecup(\infty)\}_{\numserver}$.


\end{remark}

\section{Simulation Results}\label{sec:simulations}
\newcommand{\figscale}{0.38}

\begin{figure}[b]
\begin{minipage}[t]{.45\textwidth}
\centering
\includegraphics[scale=\figscale]{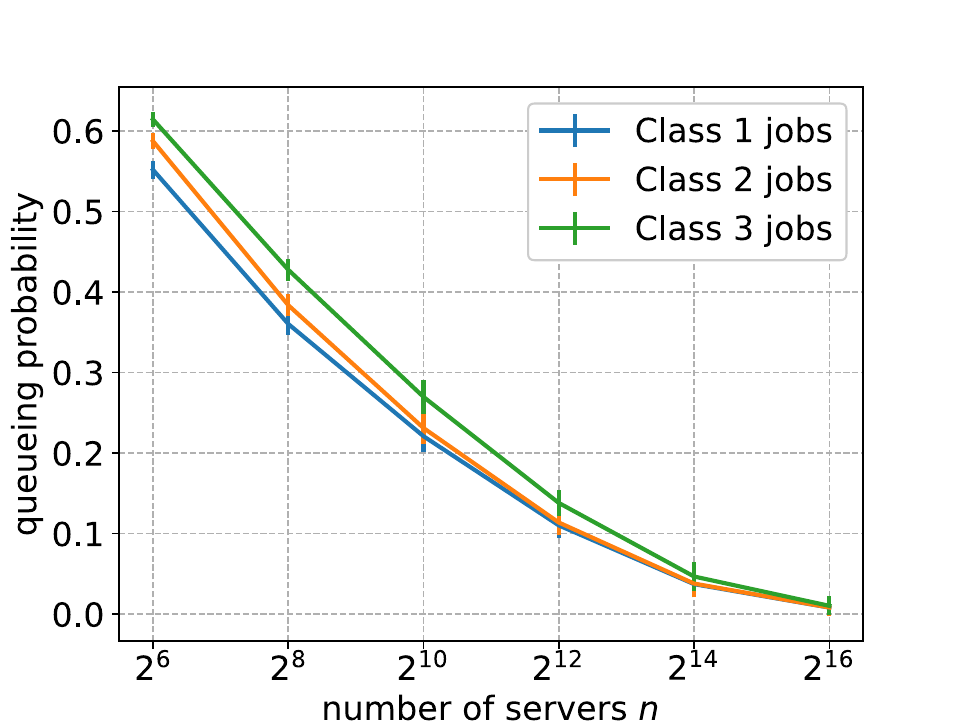}
\caption{Diminishing queueing probabilities. System load: $\loadup=1-\frac{1}{4}\numserver^{-0.1}$; server needs: $\left(\resourceup_1,\resourceup_2,\resourceup_3\right)=\left(3,\log_2\numserver,\sqrt{\numserver}\right)$.}
\label{fig:queueing-prob}
\end{minipage}
\hspace{0.2in}
\begin{minipage}[t]{.45\textwidth}
\centering
\includegraphics[scale=\figscale]{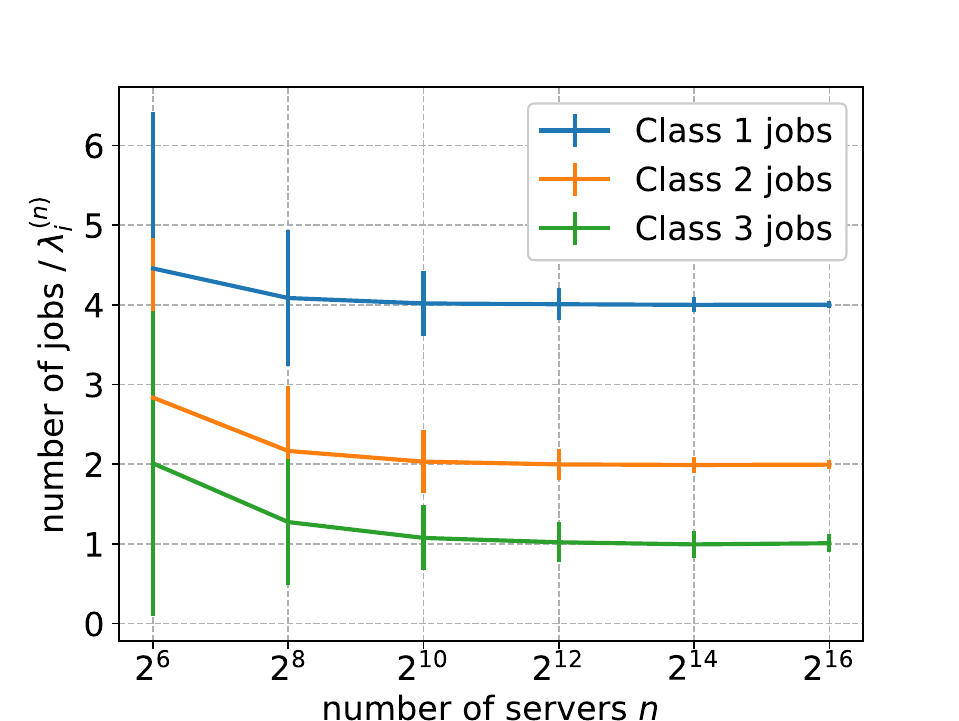}
\caption{Convergence of scaled numbers of jobs to $\frac{1}{\mu_i}$, where $\mu_1=0.25,\mu_2=0.5$, and $\mu_3=1$. System load: $\loadup=1-\frac{1}{4}\numserver^{-0.1}$; server needs: $\left(\resourceup_1,\resourceup_2,\resourceup_3\right)=\left(3,\log_2\numserver,\sqrt{\numserver}\right)$.}
\label{fig:queue-len}
\end{minipage}
\vspace{-0.2in}
\end{figure}

In this section, we perform three sets of simulation experiments to demonstrate our theoretical results and investigate gaps in the theory. 

In all the experiments, we simulate a sequence of systems with $\numserver=2^6, 2^8, 2^{10}, 2^{12}, 2^{14}, 2^{16}$ servers.  Since the scaling of jobs' server needs is a distinctive feature of our model, we first focus on a setting with three types of jobs whose server needs are $\resourceup_1=3$, $\resourceup_2=\log_2\numserver$ and $\resourceup_3=\sqrt{\numserver}$ for sets I and II; then we vary the maximum server need in set III.  The service times are exponentially distributed with rates $\mu_1=0.25, \mu_2=0.5$ and $\mu_3=1$.  Values of parameters for sets I and II are summarized in Table~\ref{tab:simulation-parameters}.

\vspace{-0.075in}
\begin{table}[!hbp]
    \small
    \centering
    \begin{tabular}{c r r r r r r}
    \toprule
    $\numserver$ & $2^6=64$ & $2^8=256$ & $2^{10}=1024$ & $2^{12}=4096$ & $2^{14}=16384$ & $2^{16}=65536$ \\
    \midrule
    I, II: $\left(\resourceup_1,\resourceup_2,\resourceup_3\right)$ & $(3,6,8)$ & $(3,8,16)$ & $(3,10,32)$ & $(3,12,64)$ & $(3, 14, 128)$ & $(3, 16, 256)$\\
    I: $\loadup=1-\frac{1}{4}\numserver^{-0.1}$ & $0.8351$ & $0.8564$ & $0.8750$ & $0.8912$ & $0.9053$ & $0.9175$\\
    II: $\loadup=1-\numserver^{-0.3}$ & $0.7128$ & $0.8105$ & $0.8750$ &      $0.9175$ & $0.9456$ & $0.9641$\\
    \bottomrule
    \end{tabular}
    \caption{Simulation parameters}
    \label{tab:simulation-parameters}
    \vspace{-0.15in}
\end{table}

\subsubsection*{\textbf{Set I}}
In the first set of experiments, our goal is to demonstrate the diminishing queueing probability and to investigate the distribution of the number of jobs from each class.  Recall that the scaling regimes where we prove a diminishing queueing probability are the four regions in Figure~\ref{fig:scaling-regimes}.  Here we pick the most interesting region, Region 4, where both the server needs and the system load $\loadup$ scale with $\numserver$.  Specifically, we choose the load to be $\loadup=1-\frac{1}{4}\numserver^{-0.1}$.  One can verify that this setting satisfies the condition for a diminishing queueing probability established by Theorem~\ref{thm:diminishing-queueing}.

\paragraph{Queueing probability} Figure~\ref{fig:queueing-prob} shows the fraction of jobs that have to queue upon arrival for each job class, which serves as an estimate of the queueing probability.  To see how reliable these estimates are, we take the last $100,0000$ data points (seen by the Poisson arrivals) and divide them into $10$ segments.  We then calculate the fraction of jobs that queue for each segment.  The points on the curves are the mean fractions averaged over the $10$ segments, and the error bars mark one standard deviation.  The small error bars in Figure~\ref{fig:queueing-prob} indicate that the estimated queueing probabilities are very stable.  The trends of the curves in Figure~\ref{fig:queueing-prob} demonstrate that the queueing probabilities are diminishing as $\numserver$ increases, as predicted by our theoretical results.

\paragraph{Number of jobs in system}
Although we were not able to theoretically analyze the number of jobs from each class in steady state, we conjecture that the number converges in distribution based on the transient analysis.  Specifically, recall that $\numjobup_{\type}(\infty)$ denotes the number of class~$\type$ jobs (in queue plus in service) in steady state.  Then we conjecture that $\frac{1}{\lambdaup_{\type}}\numjobup_{\type}(\infty)\to \frac{1}{\mu_{\type}}$ in distribution.  Our simulation results in Figure~\ref{fig:queue-len} support this conjecture.  In Figure~\ref{fig:queue-len}, the points on the curves are the average of $\frac{1}{\lambdaup_{\type}}\numjobup_{\type}(t)$ seen by Poisson arrivals when the system is empirically steady, and the error bars mark one empirical standard deviation.  Recall that $\mu_{1}=0.25, \mu_2=0.5$ and $\mu_{3} = 1$.
We can see that the curves converge nicely to the point mass distributions at the $\frac{1}{\mu_{\type}}$'s.
\begin{figure}
\begin{minipage}[t]{.45\textwidth}
\centering
\includegraphics[scale=\figscale]{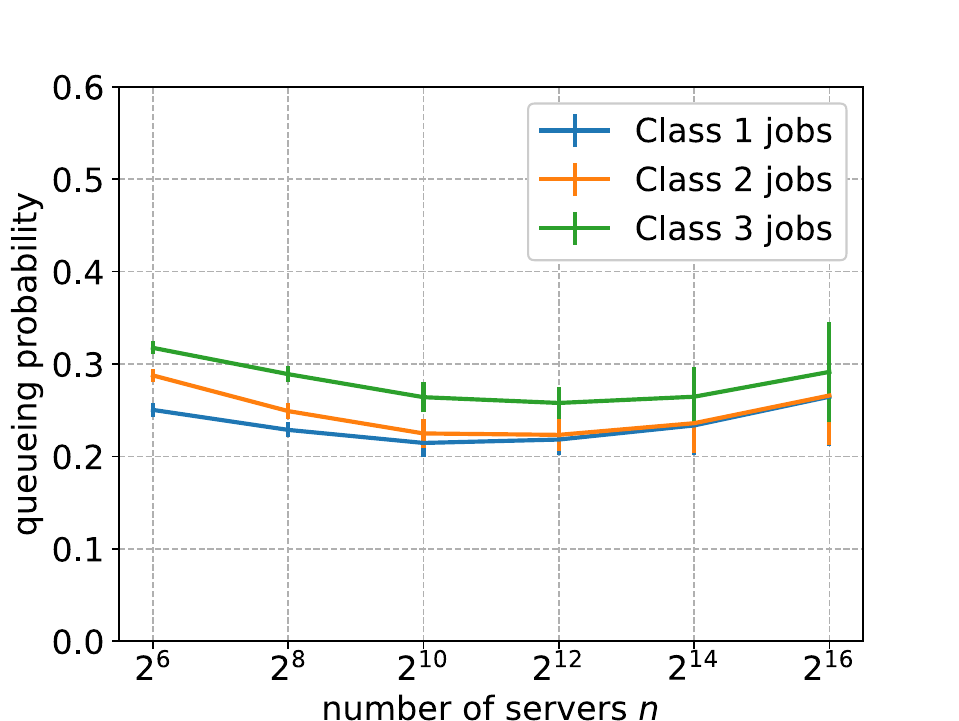}
\caption{Non-diminishing queueing probabilities. System load: $\loadup=1-\numserver^{-0.3}$; server needs: $\left(\resourceup_1,\resourceup_2,\resourceup_3\right)=\left(3,\log_2\numserver,\sqrt{\numserver}\right)$.}
\label{fig:queueing-prob-nonzero}
\end{minipage}
\hspace{0.2in}
\begin{minipage}[t]{.45\textwidth}
\centering
\includegraphics[scale=\figscale]{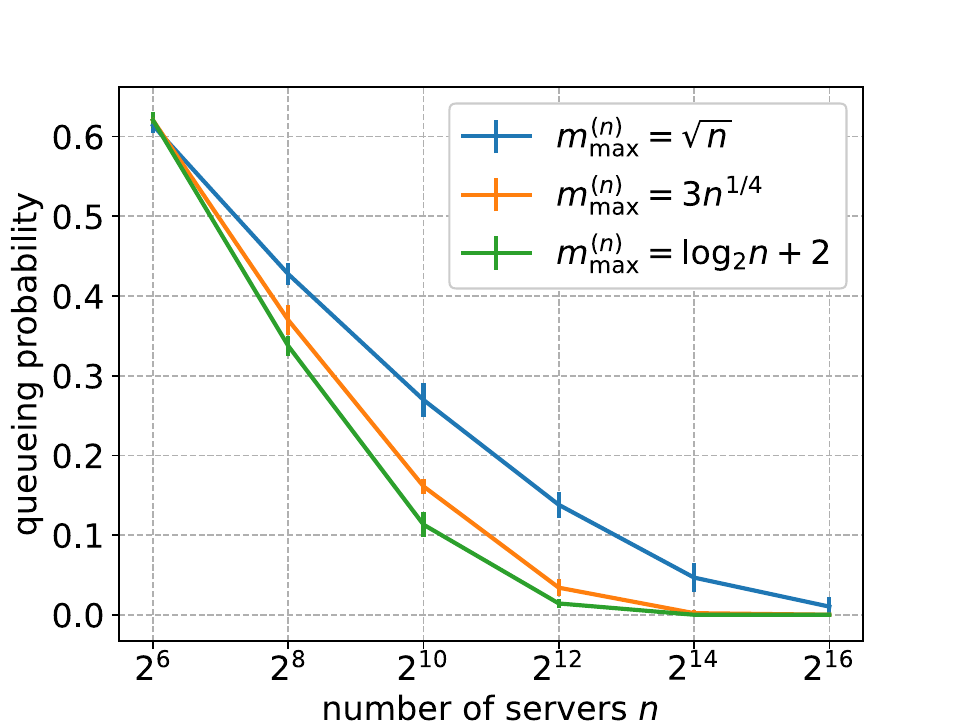}
\caption{Impact of maximum server need~$\resourcemax$. System load: $\loadup=1-\frac{1}{4}\numserver^{-0.1}$; server needs: $\left(\resourceup_1,\resourceup_2,\resourceup_3\right)=\left(3,\log_2\numserver,\sqrt{\numserver}\right)$, $\left(3,\log_2\numserver,3\numserver^{1/4}\right)$, $(3,\log_2\numserver,\log_2\numserver+2)$.}
\label{fig:queueing-prob-comp}
\end{minipage}
\vspace{-0.2in}
\end{figure}

\subsubsection*{\textbf{Set II}}
In the second set of experiments, our goal is to empirically investigate {\em when} the queueing probabilities are not diminishing.  We keep the server needs the same as those in the first set of experiments, but we increase the load to be $\loadup=1-\numserver^{-0.3}$.  This scaling regime no longer satisfies our condition for a diminishing queueing probability in Theorem~\ref{thm:diminishing-queueing}.  Figure~\ref{fig:queueing-prob-nonzero} demonstrates that indeed, it is unlikely that the queueing probabilities will converge to zero here.

\subsubsection*{\textbf{Set III}}
In the last set of experiments, we investigate how the maximum server need, $\resourcemax$, affects the queueing probability.  Figure~\ref{fig:queueing-prob-comp} compares the queueing probabilities of class~$3$ jobs under three settings where $\resourcemax$ varies as $\sqrt{\numserver}$, $3\numserver^{1/4}$, and $\log_2\numserver+2$, while keeping other parameters the same.
It shows that the queueing probability diminishes faster as $\resourcemax$ decreases.

\section{Conclusion and Future Work}\label{sec:conclusion-future-work}
In this paper, we consider a model that we refer to as the multi-server job queueing model.  In this model, a job requests multiple servers and holds on these servers simultaneously during its service.  We investigate a novel scaling regime where both the server needs of jobs and the system load scale with the total number of servers in the system.  Our main result is an upper bound on the queueing probability under FCFS scheduling, which allows us to establish conditions for the queueing probability to go to zero as the number of servers grows.  We also characterize the transient behavior of the system under constant load in the large-system limit.

There are many interesting directions that are worth further investigation in future work.  Here we list a few.
\begin{enumerate}[leftmargin=1.5em,label=(\arabic*)]
\item As we mentioned in Sections~\ref{sec:introduction} and \ref{sec:related-work}, finding exact (non-asymptotic) stability conditions under FCFS scheduling for more than two job classes with heterogeneous service rates is open.
\item Characterizing the job response time in non-asymptotic regimes is wide open.
\item It is of great interest to extend our stability condition and queueing probability upper bound to settings with service time distributions beyond exponential distributions.  For this direction, a natural first attempt would be to generalize the Lyapunov drift-based analysis in this paper.  A recent work \cite{WanMagSri_18} has demonstrated that the drift method can be used to obtain tight bounds in heavy traffic under phase-type service time distributions.  However, the analysis in \cite{WanMagSri_18} requires a complicated construction of the Lyapunov function. We anticipate that finding a proper Lyapunov function for the multi-server job model will be highly challenging.
\item Network structure is playing an increasingly important role in job scheduling due to data locality \cite{WanZhuYin_13,XieLu_15}, which constrains the servers on which a job can run.  Recent works \cite{MukBorvan_18,WenZhoSri_21,RutMuk_21} have studied how the network structure affects performance in load-balancing systems.
Analyzing the multi-server job model with a network structure is an interesting future direction.  Here the network structure can be modeled as follows: a multi-server job not only specifies how many servers it needs, but also which servers it can run on.  Then a fundamental question is: What are the conditions on job server needs, system load, and additionally, \emph{network topology}, that result in zero queueing?
\end{enumerate}

\section*{Acknowledgments.}
\begin{sloppypar}
We thank Sem Borst for his insightful comments on the paper.
This work was supported in part by NSF grants CIF-1409106, CMMI-1938909, XPS-1629444, CSR-1763701, CNS-2007733 and CNS-1955997; and by a Google 2020 Faculty Research Award.
\end{sloppypar}


\appendix
\section{\textbf{Proof of Theorem~\ref{THM:FINITE_TIME}}}
\label{proof_finite_time}

We first prove \textbf{Step 1}. Consider the queue length process $\{\numjobaup_{\type}(t)\colon t\ge0\}$
for class $\type$. It is clear that the queue length increases by
$1$ with rate $\lambdasup_{\type}$ and decreases by $1$ with rate
$\mu_{\type}\cdot \min\Big\{ \numjobaup_{\type}(t),\frac{\numserver}{\resourceup_{\type}}\Big\} $.
Let $\{N_{\type,1}(t)\colon t\ge0\}$ and $\{N_{\type,2}(t)\colon t\ge0\}$
for $\type=1,2,\dots,\numtype$ be independent unit-rate Poisson processes.
Then $\numjobaup_{\type}(t)$ can be constructed as follows:
\[
\numjobaup_{\type}(t)=\numjobaup_{\type}(0)+N_{\type,1}\left(\int_{0}^{t}\lambdasup_{\type}ds\right)-N_{\type,2}\left(\int_{0}^{t}\mu_{\type}\min\left\{ \numjobaup_{\type}(s),\frac{\numserver}{\resourceup_{\type}}\right\} ds\right).
\]

We consider a scaled version of $\numjobaup_{\type}(t)$, defined
as $\numjobaupscaled_{\type}=\frac{1}{\lambdasup_{\type}}\numjobaup_{\type}(t)$.
We have
\begin{align}
 \numjobaupscaled_{\type}(t)\nonumber 
= & \numjobaupscaled_{\type}(0)+\frac{1}{\lambdasup_{\type}}N_{\type,1}\left(\int_{0}^{t}\lambdasup_{\type}ds\right)-\frac{1}{\lambdasup_{\type}}N_{\type,2}\left(\int_{0}^{t}\lambdasup_{\type}\mu_{\type}\min\left\{ \numjobaupscaled_{\type}(s),\frac{\numserver}{\lambdasup_{\type}\resourceup_{\type}}\right\} ds\right)\nonumber \\
= & \numjobaupscaled_{\type}(0)+\frac{1}{\lambdasup_{\type}}\overline{N}_{\type,1}\left(\int_{0}^{t}\lambdasup_{\type}ds\right)-\frac{1}{\lambdasup_{\type}}\overline{N}_{\type,2}\left(\int_{0}^{t}\lambdasup_{\type}\mu_{\type}\min\left\{ \numjobaupscaled_{\type}(s),\frac{\numserver}{\lambdasup_{\type}\resourceup_{\type}}\right\} ds\right)\nonumber \\
 & +\int_{0}^{t}\left(1-\mu_{\type}\min\left\{ \numjobaupscaled_{\type}(s),\frac{\numserver}{\lambdasup_{\type}\resourceup_{\type}}\right\} \right)ds,\label{eq:mean}
\end{align}
where $\overline{N}_{\type,1}(t)=N_{\type,1}(t)-t$ and $\overline{N}_{\type,2}(t)=N_{\type,2}(t)-t$
are the centered Poisson processes.
Consider the term \eqref{eq:mean} and note that $\frac{\numserver\mu_{\type}}{\lambdasup_{\type}\resourceup_{\type}}=\frac{1}{\load_{\type}}$.
Define a function $F_{\type}\colon\mathbb{R}_{+}\rightarrow\mathbb{R}$
by $F_{\type}(x)=1-\min\left\{ \mu_{\type}x,1/\load_{\type}\right\} $.
Then (\ref{eq:mean}) can be written as $\int_{0}^{t}F_{\type}\Big(\numjobaupscaled_{\type}(s)\Big)ds$.
It can be easily verified that $F_{\type}(\cdot)$
is $\mu_{\type}$-Lipschitz. Note that the fluid model $\numjoblim_{\type}(t)$ can be
written in an integral form as follows: 
\[
\numjoblim_{\type}(t)=\numjoblim_{\type}(0)+\int_{0}^{t}\left(1-\mu_{\type}\min\left\{ \mu_{i}\numjoblim_{\type}(t),\frac{1}{\load_{\type}}\right\} \right)ds=\numjoblim_{\type}(0)+\int_{0}^{t}F_{\type}\left(\numjoblim_{\type}(s)\right)ds.
\]
Now we compare $\numjobaupscaled_{\type}(t)$ with $\numjoblim_{\type}(t)$:
\begin{align}
&  \left|\numjobaupscaled_{\type}(t)-\numjoblim_{\type}(t)\right| \nonumber\\
\leq & \left|\numjobaupscaled_{\type}(0)-\numjoblim_{\type}(0)\right|+\int_{0}^{t}\left|F_{\type}\left(\numjobaupscaled_{\type}(s)\right)-F_{\type}\left(\numjoblimup_{\type}(t)\right)\right|ds \nonumber\\
 & +\frac{1}{\lambdasup_{\type}}\left|\overline{N}_{\type,1}\left(\int_{0}^{t}\lambdasup_{\type}ds\right)\right|+\frac{1}{\lambdasup_{\type}}\left|\overline{N}_{\type,2}\left(\int_{0}^{t}\lambdasup_{\type}\mu_{\type}\min\left\{ \numjobaupscaled_{\type}(s),\frac{\numserver}{\lambdasup_{\type}\resourceup_{\type}}\right\} ds\right)\right| \nonumber \\
\leq & \left|\numjobaupscaled_{\type}(0)-\numjoblim_{\type}(0)\right|+\int_{0}^{t}\mu_{\type}\left|\numjobaupscaled_{\type}(s)-\numjoblimup_{\type}(t)\right|ds\nonumber \\
 & +\frac{1}{\lambdasup_{\type}}\left|\overline{N}_{\type,1}\left(\int_{0}^{t}\lambdasup_{\type}ds\right)\right|+\frac{1}{\lambdasup_{\type}}\left|\overline{N}_{\type,2}\left(\int_{0}^{t}\lambdasup_{\type}\mu_{\type}\min\left\{ \numjobaupscaled_{\type}(s),\frac{\numserver}{\lambdasup_{\type}\resourceup_{\type}}\right\} ds\right)\right|,\label{eq:centered}
\end{align}
where the second inequality follows from the Lipschitz property of $F_{\type}(\cdot)$.

Next we bound the terms in (\ref{eq:centered}). By applying a classical
inequality on unit-rate Poisson process (see Proposition
5.2 in~\cite{DraMas_09}), we have 
\begin{align*}
\Pr\left(\sup_{0\le t\le T}\frac{1}{\lambdasup_{\type}}\left|\overline{N}_{\type,1}\left(\int_{0}^{t}\lambdasup_{\type}ds\right)\right|>\epsilon'\right) & =\Pr\left(\sup_{0\le t\le\lambdasup_{\type}T}\left|\overline{N}_{\type,1}(t)\right|>\epsilon'\lambdasup_{\type}\right) \le 2e^{-\lambdasup_{\type}T\cdot h\left(\frac{\epsilon'}{T}\right)},
\end{align*}
where $h$ is a function defined by $h(u)=(1+u)\log(1+u)-u$.
Similarly, 
\begin{align*}
 & \Pr\left(\sup_{0\le t\le T}\frac{1}{\lambdasup_{\type}}\left|\overline{N}_{\type,2}\left(\int_{0}^{t}\lambdasup_{\type}\mu_{\type}\min\left\{ \numjobaupscaled_{\type}(s),\frac{\numserver}{\lambdasup_{\type}\resourceup_{\type}}\right\} ds\right)\right|>\epsilon'\right) \\
\leq & \Pr\bigg(\sup_{0\le t\le {\numserver\mu_{\type}T/\resourceup_{\type}}}\left|\overline{N}_{\type,2}\left(t\right)\right|>\epsilon'\lambdasup_{\type}\bigg) 
\leq  2e^{-\frac{\numserver\mu_{\type}T}{\resourceup_{\type}}h\left(\frac{\load_{\type}\epsilon'}{T}\right)}.
\end{align*}

Combining these bounds, we have 
\begin{align*}
 & \Pr\left(\sup_{0\le t\le T}\left(\left|\numjobaupscaled_{\type}(t)-\numjoblim_{\type}(t)\right|-\int_{0}^{t}\mu_{\type}\left|\numjobaupscaled_{\type}(s)-\numjoblim_{\type}(t)\right|ds\right)>3\epsilon'\right)\\
\leq & \Pr\Big(\big|\numjobaupscaled_{\type}(0)-\numjoblim_{\type}(0)\big|>\epsilon' \Big) + 2e^{-\lambdasup_{\type}T\cdot h\left(\frac{\epsilon'}{T}\right)}+2e^{-\frac{\numserver\mu_{\type}T}{\resourceup_{\type}}h\left(\frac{\load_{\type}\epsilon'}{T}\right)}.
\end{align*}
Note that the function $\big|\numjobaupscaled_{\type}(t)-\numjoblim_{\type}(t)\big|$
is finite with probability $1$ on the interval $[0,T].$ Applying
Gronwall's lemma yields
\begin{align}
 & \Pr\left(\sup_{0\le t\le T}\left|\numjobaupscaled_{\type}(t)-\numjoblim_{\type}(t)\right|>\frac{\epsilon}{\numtype}\right)\nonumber \\
\leq & \Pr\left(\sup_{0\le t\le T}\left(\left|\numjobaupscaled_{\type}(t)-\numjoblim_{\type}(t)\right|-\int_{0}^{t}\mu_{\type}\left|\numjobaupscaled_{\type}(s)-\numjoblim_{\type}(t)\right|ds\right)>\frac{\epsilon e^{-\mu_{\type}T}}{\numtype}\right)\nonumber \\
\leq & \Pr\bigg(\big|\numjobaupscaled_{\type}(0)-\numjoblim_{\type}(0)\big|>\frac{\epsilon e^{-\mu_{\type}T}}{3\numtype} \bigg)+ 2e^{-\lambdasup_{\type}T\cdot h\left(\frac{\epsilon e^{-\mu_{\type}T}}{3\numtype T}\right)}+2e^{-\frac{\numserver\mu_{\type}T}{\resourceup_{\type}}h\left(\frac{\load_{\type}\epsilon e^{-\mu_{\type}T}}{3\numtype T}\right)},\label{eq:bound-aux-y-n}
\end{align}
where the last step is obtained by setting $\epsilon'=\frac{\epsilon e^{-\mu_{\type}T}}{3\numtype}$. This completes Step~1.

Next we prove \textbf{Step 2}. As we noted, $\numjobup_{\type}(t)=\numjobaup_{\type}(t)$
for all class $\type$ and all $t\in[0,T]$ if 
$\sup_{0\le t\le T}\sum_{\type=1}^{\numtype}\resourceup_{\type}\numjobaup_{\type}(t)\le\numserver.$
We make the following claim (the proof is provided at the end).
\begin{claim}
The inequality $\sup_{0\le t\le T}\sum_{\type=1}^{\numtype}\resourceup_{\type}\numjobaup_{\type}(t)\le\numserver$ holds true if, for all $\type$,
\begin{equation}
\sup_{0\le t\le T}\left|\numjobaupscaled_{\type}(t)-\numjoblim_{\type}(t)\right|\le \delta, \label{eq:fluid_appro}
\end{equation}
where $\delta=\frac{1}{\mu_{\max}\load}\big(1-\sum_{\type: \numjoblim_\type(0)\geq 1/\mu_\type } \load_{\type}\mu_{\type}\numjoblim_\type(0) - \sum_{\type:  \numjoblim_\type(0) < 1/\mu_\type} \load_\type\big)$ is a positive constant. 
\label{claim:processproof}
\end{claim}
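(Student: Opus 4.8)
The plan is to translate the server-occupancy constraint into a statement about the weighted sum $\sum_{\type=1}^{\numtype}\mu_\type\load_\type\numjobaupscaled_\type(t)$ and then control this quantity using the fluid trajectory together with the deviation hypothesis~\eqref{eq:fluid_appro}. The starting observation is the algebraic identity $\resourceup_\type\lambdasup_\type=\numserver\mu_\type\load_\type$, which follows directly from the definition $\load_\type=\loadup_\type=\frac{\lambdasup_\type\resourceup_\type}{\numserver\mu_\type}$. Since $\numjobaup_\type(t)=\lambdasup_\type\numjobaupscaled_\type(t)$, this identity lets me rewrite
\begin{equation*}
\sum_{\type=1}^{\numtype}\resourceup_\type\numjobaup_\type(t)=\numserver\sum_{\type=1}^{\numtype}\mu_\type\load_\type\numjobaupscaled_\type(t),
\end{equation*}
so the claim $\sum_{\type=1}^{\numtype}\resourceup_\type\numjobaup_\type(t)\le\numserver$ is equivalent to $\sum_{\type=1}^{\numtype}\mu_\type\load_\type\numjobaupscaled_\type(t)\le 1$ for every $t\in[0,T]$.

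Next I would invoke the hypothesis~\eqref{eq:fluid_appro} in its one-sided form $\numjobaupscaled_\type(t)\le\numjoblim_\type(t)+\delta$, giving
\begin{equation*}
\sum_{\type=1}^{\numtype}\mu_\type\load_\type\numjobaupscaled_\type(t)\le\sum_{\type=1}^{\numtype}\mu_\type\load_\type\numjoblim_\type(t)+\delta\sum_{\type=1}^{\numtype}\mu_\type\load_\type.
\end{equation*}
The second term is bounded by $\delta\,\mu_{\max}\load$ since $\sum_\type\load_\type=\load$. For the first term I would use the monotonicity of the explicit solution~\eqref{eq:solution_ODE_1}: when $\numjoblim_\type(0)\ge 1/\mu_\type$ the trajectory is nonincreasing, so $\sup_{t\ge 0}\numjoblim_\type(t)=\numjoblim_\type(0)$; when $\numjoblim_\type(0)<1/\mu_\type$ it is nondecreasing toward $1/\mu_\type$, so $\sup_{t\ge 0}\numjoblim_\type(t)=1/\mu_\type$. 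Bounding each summand by its time-supremum yields
\begin{equation*}
\sum_{\type=1}^{\numtype}\mu_\type\load_\type\numjoblim_\type(t)\le\sum_{\type:\,\numjoblim_\type(0)\ge 1/\mu_\type}\load_\type\mu_\type\numjoblim_\type(0)+\sum_{\type:\,\numjoblim_\type(0)<1/\mu_\type}\load_\type=:A.
\end{equation*}
The crux is recognizing that $\delta$ has been chosen precisely so that $\delta\,\mu_{\max}\load=1-A$; substituting gives $\sum_{\type=1}^{\numtype}\mu_\type\load_\type\numjobaupscaled_\type(t)\le A+(1-A)=1$, which is exactly what is needed.

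Finally I would confirm that $\delta$ is a positive constant, i.e.\ that $A<1$, using the theorem's standing assumption $\numjoblim_\type(0)<\frac{1}{\load\mu_\type}$ together with $\load<1$: each term of the first sum satisfies $\load_\type\mu_\type\numjoblim_\type(0)<\load_\type/\load$, and each term of the second satisfies $\load_\type<\load_\type/\load$, so $A<\frac{1}{\load}\sum_{\type=1}^{\numtype}\load_\type=1$. This same assumption, via $\numjoblim_\type(0)<\frac{1}{\load\mu_\type}\le\frac{1}{\load_\type\mu_\type}$, keeps each trajectory in the regime where~\eqref{eq:solution_ODE_1} is valid. I expect the main conceptual obstacle to be exactly this bookkeeping: seeing that the definition of $\delta$ encodes the uniform time-supremum of the weighted fluid trajectory, so that bounding each $\numjoblim_\type(t)$ separately by its own maximum makes the deviation allowance close the gap to $1$ with no slack. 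Once the monotonicity of~\eqref{eq:solution_ODE_1} is in hand, the remaining steps are routine algebra.
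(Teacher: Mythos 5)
Your proposal is correct and follows essentially the same route as the paper's own proof: the identity $\resourceup_\type\lambdasup_\type=\numserver\mu_\type\load_\type$ to reduce the occupancy constraint to $\sum_\type\mu_\type\load_\type\numjobaupscaled_\type(t)\le 1$, the one-sided deviation bound contributing $\delta\mu_{\max}\load$, the monotonicity of the explicit solution \eqref{eq:solution_ODE_1} to bound the weighted fluid sum by the constant the paper calls $\alpha$ (your $A$), and the observation that $\delta$ is defined exactly so that $\delta\mu_{\max}\load=1-\alpha$, with positivity of $\delta$ following from $\numjoblim_\type(0)<\frac{1}{\load\mu_\type}$ and $\load<1$. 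No gaps; the argument matches the paper's step for step.
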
 

By Claim~\ref{claim:processproof}, we have 
\begin{align}
 & \Pr\left(\exists \text{ \ensuremath{\type} and \ensuremath{t} $\in [0,T]$ s.t. } \numjobup_{\type}(t)\neq\numjobaup_{\type}(t)\right) \nonumber\\
\leq & \sum_{\type=1}^{\numtype}\Pr\left(\sup_{0\le t\le T}\left|\numjobaupscaled_{\type}(t)-\numjoblim_{\type}(t)\right|>\delta \right)\nonumber\\
\leq & \sum_{\type=1}^{\numtype}\left( \Pr\bigg(\big|\numjobaupscaled_{\type}(0)-\numjoblim_{\type}(0)\big|> \frac{\delta}{3}\bigg)+ 2e^{-\lambdasup_{\type}T\cdot h\left(\frac{\delta e^{-\mu_{\type}T}}{3 T}\right)}+2e^{-\frac{\numserver\mu_{\type}T}{\resourceup_{\type}}h\left(\frac{\delta\load_{\type} e^{-\mu_{\type}T}}{3 T}\right)}\right). \label{eq:bound_deviate}
\end{align}

As the last step, we plug the upper bounds \eqref{eq:bound-aux-y-n} and \eqref{eq:bound_deviate} into \eqref{eq:deviate} and \eqref{eq:term-aux} and obtain:
\begin{align}
 & \Pr\bigg(\sup_{0\le t\le T}\sum_{\type=1}^{\numtype}\left|\scalenumjobup_\type-\numjoblim_{\type}(t)\right|>\epsilon\bigg)\nonumber \\
\le & 2\sum_{\type=1}^{\numtype}\bigg(e^{-\lambdasup_{\type}T\cdot h\left(\frac{\epsilon e^{-\mu_{\type}T}}{3\numtype T}\right)}+e^{-\frac{\numserver\mu_{\type}T}{\resourceup_{\type}}h\left(\frac{\load_{\type}\epsilon e^{-\mu_{\type}T}}{3\numtype T}\right)}+e^{-\lambdasup_{\type}T\cdot h\left(\frac{\delta e^{-\mu_{\type}T}}{3T}\right)}+e^{-\frac{\numserver\mu_{\type}T}{\resourceup_{\type}}h\left(\frac{\load_{\type}\delta e^{-\mu_{\type}T}}{3T}\right)}\bigg)\label{eq:bound_exp}\\
& +\sum_{\type=1}^{\numtype} \left( \Pr\bigg(\big|\numjobaupscaled_{\type}(0)-\numjoblim_{\type}(0)\big|>\frac{\epsilon e^{-\mu_{\type}T}}{3\numtype} \bigg)+ \Pr\bigg(\big|\numjobaupscaled_{\type}(0)-\numjoblim_{\type}(0)\big|> \frac{\delta}{3}\bigg)  \right)
\end{align}

By the assumption that $\numjobaupscaled_\type(0)=\scalenumjobup_\type(0)$ and $\lim_{\numserver \rightarrow\infty} \scalenumjobup_\type (0) =\numjoblim_\type(0)$ in probability, we have 
\begin{align*}
    \lim_{\numserver \rightarrow \infty} \Pr\bigg(\big|\numjobaupscaled_{\type}(0)-\numjoblim_{\type}(0)\big|>\frac{\epsilon e^{-\mu_{\type}T}}{3\numtype} \bigg) = 0,\qquad \lim_{\numserver \rightarrow \infty} \Pr\bigg(\big|\numjobaupscaled_{\type}(0)-\numjoblim_{\type}(0)\big|> \frac{\delta}{3}\bigg) = 0.
\end{align*}
With $\resourcemax=o(n)$, it follows that $\lambdasup_\type =\frac{n\load_\type \mu_\type}{\resourceup_\type} \rightarrow \infty$ as $\numserver \rightarrow \infty,$ thus the terms in \eqref{eq:bound_exp} converge to $0$ as $\numserver\rightarrow \infty.$   Therefore, 
\begin{align*}
   \lim_{\numserver \rightarrow \infty} \Pr\bigg(\sup_{0\le t\le T}\sum_{\type=1}^{\numtype}\left|\scalenumjobup_\type-\numjoblim_{\type}(t)\right|>\epsilon\bigg) =0,\qquad \forall \epsilon>0. 
\end{align*}
This completes the proof of Theorem~\ref{THM:FINITE_TIME}. \qed

\begin{proof}[Proof of Claim~\ref{claim:processproof}]
 By the condition in (\ref{eq:fluid_appro}), we have 
\begin{align*}
\sum_{\type=1}^{\numtype}\resourceup_{\type}\numjobaup_{\type}(t) & =\sum_{\type=1}^{\numtype}\numserver\load_{\type}\mu_{\type}\numjobaupscaled_{\type}(t)
 \leq n\delta \mu_{\max}\load+n\sum_{\type=1}^{\numtype}\load_{\type}\mu_{\type}\numjoblim_{\type}(t).
\end{align*}
Recall that the initial condition $\numjoblimvec(0)$ satisfies 
$\numjoblim_{\type}(0)<\frac{1}{\load\mu_{\type}}<\frac{1}{\load_{\type}\mu_{\type}}$ for all $\type$. We then have
$\numjoblim_{\type}(t)=\left(y_{i}(0)-\frac{1}{\mu_{i}}\right)e^{-\mu_{i}t}+\frac{1}{\mu_{i}},$ for each $\type\in\{1,\ldots,\numtype\}.$ Therefore, for any $t\in [0,T],$ 
\begin{align*}
\sum_{\type=1}^{\numtype}\load_{\type}\mu_{\type}\numjoblim_{\type}(t) & = \sum_{\type=1}^{\numtype}\load_{\type}\mu_{\type}\left[\left(\numjoblim_{\type}(0)-\frac{1}{\mu_{\type}}\right)e^{-\mu_{\type}t}+\frac{1}{\mu_{\type}}\right]  \leq \sum_{\type: \numjoblim_\type(0)\geq 1/\mu_\type } \load_{\type}\mu_{\type}\numjoblim_\type(0) + \sum_{\type:  \numjoblim_\type(0) < 1/\mu_\type} \load_\type \triangleq \alpha.
\end{align*}
Note that $\alpha < \sum_{\type: \numjoblim_\type(0)\geq 1/\mu_\type } \frac{\load_\type}{\load} + \sum_{\type:  \numjoblim_\type(0) < 1/\mu_\type} \frac{\load_\type}{\load}  =1.$ With $\delta=\frac{1-\alpha}{\mu_{\max}\load},$ it follows that
\begin{align*}
\sup_{0\le t\le T}\sum_{\type=1}^{\numtype}\resourceup_{\type}\numjobaup_{\type}(t) & \leq n\delta \mu_{\max}\load+ n\alpha=n.
\end{align*}
This completes the proof of Claim~\ref{claim:processproof}.
\end{proof}



\begin{thebibliography}{57}


\ifx \showCODEN    \undefined \def \showCODEN     #1{\unskip}     \fi
\ifx \showDOI      \undefined \def \showDOI       #1{#1}\fi
\ifx \showISBNx    \undefined \def \showISBNx     #1{\unskip}     \fi
\ifx \showISBNxiii \undefined \def \showISBNxiii  #1{\unskip}     \fi
\ifx \showISSN     \undefined \def \showISSN      #1{\unskip}     \fi
\ifx \showLCCN     \undefined \def \showLCCN      #1{\unskip}     \fi
\ifx \shownote     \undefined \def \shownote      #1{#1}          \fi
\ifx \showarticletitle \undefined \def \showarticletitle #1{#1}   \fi
\ifx \showURL      \undefined \def \showURL       {\relax}        \fi
\providecommand\bibfield[2]{#2}
\providecommand\bibinfo[2]{#2}
\providecommand\natexlab[1]{#1}
\providecommand\showeprint[2][]{arXiv:#2}

\bibitem[\protect\citeauthoryear{Abadi, Barham, Chen, Chen, Davis, Dean, Devin,
  Ghemawat, Irving, Isard, Kudlur, Levenberg, Monga, Moore, Murray, Steiner,
  Tucker, Vasudevan, Warden, Wicke, Yu, and Zheng}{Abadi et~al\mbox{.}}{2016}]%
        {AbaBarChe_16}
\bibfield{author}{\bibinfo{person}{Mart{\'\i}n Abadi}, \bibinfo{person}{Paul
  Barham}, \bibinfo{person}{Jianmin Chen}, \bibinfo{person}{Zhifeng Chen},
  \bibinfo{person}{Andy Davis}, \bibinfo{person}{Jeffrey Dean},
  \bibinfo{person}{Matthieu Devin}, \bibinfo{person}{Sanjay Ghemawat},
  \bibinfo{person}{Geoffrey Irving}, \bibinfo{person}{Michael Isard},
  \bibinfo{person}{Manjunath Kudlur}, \bibinfo{person}{Josh Levenberg},
  \bibinfo{person}{Rajat Monga}, \bibinfo{person}{Sherry Moore},
  \bibinfo{person}{Derek~G. Murray}, \bibinfo{person}{Benoit Steiner},
  \bibinfo{person}{Paul Tucker}, \bibinfo{person}{Vijay Vasudevan},
  \bibinfo{person}{Pete Warden}, \bibinfo{person}{Martin Wicke},
  \bibinfo{person}{Yuan Yu}, {and} \bibinfo{person}{Xiaoqiang Zheng}.}
  \bibinfo{year}{2016}\natexlab{}.
\newblock \showarticletitle{TensorFlow: A System for Large-Scale Machine
  Learning}. In \bibinfo{booktitle}{\emph{Proc. USENIX Conf. Operating Systems
  Design and Implementation (OSDI)}}. \bibinfo{address}{Savannah, GA},
  \bibinfo{pages}{265--283}.
\newblock


\bibitem[\protect\citeauthoryear{Afanaseva, Bashtova, and Grishunina}{Afanaseva
  et~al\mbox{.}}{2019}]%
        {Afanasevaetal19}
\bibfield{author}{\bibinfo{person}{Larisa Afanaseva}, \bibinfo{person}{Elena
  Bashtova}, {and} \bibinfo{person}{Svetlana Grishunina}.}
  \bibinfo{year}{2019}\natexlab{}.
\newblock \showarticletitle{Stability analysis of a multi-server model with
  simultaneous service and a regenerative input flow}.
\newblock \bibinfo{journal}{\emph{Methodology and Computing in Applied
  Probability}} (\bibinfo{year}{2019}), \bibinfo{pages}{1--17}.
\newblock


\bibitem[\protect\citeauthoryear{Arthurs and Kaufman}{Arthurs and
  Kaufman}{1979}]%
        {ArtKau_79}
\bibfield{author}{\bibinfo{person}{E. Arthurs} {and} \bibinfo{person}{J.
  Kaufman}.} \bibinfo{year}{1979}\natexlab{}.
\newblock \showarticletitle{Sizing a Message Store Subject to Blocking
  Criteria}. In \bibinfo{booktitle}{\emph{Proc. Int. Symp. Computer
  Performance, Modeling, Measurements and Evaluation (IFIP Performance)}}.
  \bibinfo{pages}{547--564}.
\newblock


\bibitem[\protect\citeauthoryear{Baccelli and Foss}{Baccelli and Foss}{1995}]%
        {BacFos_95}
\bibfield{author}{\bibinfo{person}{François Baccelli} {and}
  \bibinfo{person}{Serguei Foss}.} \bibinfo{year}{1995}\natexlab{}.
\newblock \showarticletitle{On the Saturation Rule for the Stability of
  Queues}.
\newblock \bibinfo{journal}{\emph{J. Appl. Probab.}} \bibinfo{volume}{32},
  \bibinfo{number}{2} (\bibinfo{year}{1995}), \bibinfo{pages}{494--507}.
\newblock


\bibitem[\protect\citeauthoryear{Bean, Gibbens, and Zachary}{Bean
  et~al\mbox{.}}{1995}]%
        {BeaGibZac_95}
\bibfield{author}{\bibinfo{person}{N.~G. Bean}, \bibinfo{person}{R.~J.
  Gibbens}, {and} \bibinfo{person}{S. Zachary}.}
  \bibinfo{year}{1995}\natexlab{}.
\newblock \showarticletitle{Asymptotic Analysis of Single Resource Loss Systems
  in Heavy Traffic, with Applications to Integrated Networks}.
\newblock \bibinfo{journal}{\emph{Adv. Appl. Probab.}} \bibinfo{volume}{27},
  \bibinfo{number}{1} (\bibinfo{date}{March} \bibinfo{year}{1995}),
  \bibinfo{pages}{273--292}.
\newblock


\bibitem[\protect\citeauthoryear{Benameur, Fredj, Delcoigne, Oueslati-Boulahia,
  and Roberts}{Benameur et~al\mbox{.}}{2001}]%
        {BenBenDel_01}
\bibfield{author}{\bibinfo{person}{N. Benameur}, \bibinfo{person}{S.~Ben
  Fredj}, \bibinfo{person}{F. Delcoigne}, \bibinfo{person}{S.
  Oueslati-Boulahia}, {and} \bibinfo{person}{J.W. Roberts}.}
  \bibinfo{year}{2001}\natexlab{}.
\newblock \showarticletitle{Integrated Admission Control for Streaming and
  Elastic Traffic}. In \bibinfo{booktitle}{\emph{Int. Workshop Quality of
  Future Internet Services (QofIS)}}. \bibinfo{pages}{69--81}.
\newblock


\bibitem[\protect\citeauthoryear{Bertsimas, Gamarnik, and Tsitsiklis}{Bertsimas
  et~al\mbox{.}}{2001}]%
        {BerGamTsi_01}
\bibfield{author}{\bibinfo{person}{Dimitris Bertsimas}, \bibinfo{person}{David
  Gamarnik}, {and} \bibinfo{person}{John~N. Tsitsiklis}.}
  \bibinfo{year}{2001}\natexlab{}.
\newblock \showarticletitle{Performance of Multiclass Markovian Queueing
  Networks Via Piecewise Linear Lyapunov Functions}.
\newblock \bibinfo{journal}{\emph{Ann. Appl. Probab.}} \bibinfo{volume}{11},
  \bibinfo{number}{4} (\bibinfo{date}{11} \bibinfo{year}{2001}),
  \bibinfo{pages}{1384--1428}.
\newblock


\bibitem[\protect\citeauthoryear{Bonald and Comte}{Bonald and Comte}{2017}]%
        {BonCom_17}
\bibfield{author}{\bibinfo{person}{Thomas Bonald} {and}
  \bibinfo{person}{Céline Comte}.} \bibinfo{year}{2017}\natexlab{}.
\newblock \showarticletitle{Balanced fair resource sharing in computer
  clusters}.
\newblock \bibinfo{journal}{\emph{Perform. Eval.}}  \bibinfo{volume}{116}
  (\bibinfo{year}{2017}), \bibinfo{pages}{70 -- 83}.
\newblock


\bibitem[\protect\citeauthoryear{Braverman, Dai, and Feng}{Braverman
  et~al\mbox{.}}{2017}]%
        {BraDaiFen_17}
\bibfield{author}{\bibinfo{person}{Anton Braverman}, \bibinfo{person}{J.~G.
  Dai}, {and} \bibinfo{person}{Jiekun Feng}.} \bibinfo{year}{2017}\natexlab{}.
\newblock \showarticletitle{Stein’s method for steady-state diffusion
  approximations: an introduction through the {Erlang-A} and {Erlang-C}
  models}.
\newblock \bibinfo{journal}{\emph{Stoch. Syst.}} \bibinfo{volume}{6},
  \bibinfo{number}{2} (\bibinfo{year}{2017}), \bibinfo{pages}{301--366}.
\newblock


\bibitem[\protect\citeauthoryear{Brill and Green}{Brill and Green}{1984}]%
        {BrilGre_84}
\bibfield{author}{\bibinfo{person}{Percy~H. Brill} {and} \bibinfo{person}{Linda
  Green}.} \bibinfo{year}{1984}\natexlab{}.
\newblock \showarticletitle{Queues in Which Customers Receive Simultaneous
  Service from a Random Number of Servers: A System Point Approach}.
\newblock \bibinfo{journal}{\emph{Manage. Sci.}} \bibinfo{volume}{30},
  \bibinfo{number}{1} (\bibinfo{year}{1984}), \bibinfo{pages}{51--68}.
\newblock


\bibitem[\protect\citeauthoryear{Dasylva and Srikant}{Dasylva and
  Srikant}{1999}]%
        {DasSri_99}
\bibfield{author}{\bibinfo{person}{A. Dasylva} {and} \bibinfo{person}{R.
  Srikant}.} \bibinfo{year}{1999}\natexlab{}.
\newblock \showarticletitle{Bounds on the Performance of Admission Control and
  Routing Policies for General Topology Networks with Multiple Call Centers}.
  In \bibinfo{booktitle}{\emph{Proc. IEEE Int. Conf. Computer Communications
  (INFOCOM)}}, Vol.~\bibinfo{volume}{2}. \bibinfo{address}{New York, NY},
  \bibinfo{pages}{505--512}.
\newblock


\bibitem[\protect\citeauthoryear{Draief and Massouli{\'e}}{Draief and
  Massouli{\'e}}{2009}]%
        {DraMas_09}
\bibfield{author}{\bibinfo{person}{Moez Draief} {and} \bibinfo{person}{Laurent
  Massouli{\'e}}.} \bibinfo{year}{2009}\natexlab{}.
\newblock \bibinfo{booktitle}{\emph{Epidemics and Rumours in Complex
  Networks}}.
\newblock \bibinfo{publisher}{Cambridge University Press}.
\newblock


\bibitem[\protect\citeauthoryear{Eryilmaz and Srikant}{Eryilmaz and
  Srikant}{2012}]%
        {ErySri_12}
\bibfield{author}{\bibinfo{person}{Atilla Eryilmaz} {and} \bibinfo{person}{R.
  Srikant}.} \bibinfo{year}{2012}\natexlab{}.
\newblock \showarticletitle{Asymptotically Tight Steady-state Queue Length
  Bounds Implied by Drift Conditions}.
\newblock \bibinfo{journal}{\emph{Queueing Syst.}} \bibinfo{volume}{72},
  \bibinfo{number}{3-4} (\bibinfo{date}{Dec.} \bibinfo{year}{2012}),
  \bibinfo{pages}{311--359}.
\newblock


\bibitem[\protect\citeauthoryear{Filippopoulos and Karatza}{Filippopoulos and
  Karatza}{2007}]%
        {FilippopoulosKaratza07}
\bibfield{author}{\bibinfo{person}{D. Filippopoulos} {and} \bibinfo{person}{H.
  Karatza}.} \bibinfo{year}{2007}\natexlab{}.
\newblock \showarticletitle{An M/M/2 parallel system model with pure space
  sharing among rigid jobs}.
\newblock \bibinfo{journal}{\emph{Mathematical and Computer Modelling}}
  \bibinfo{volume}{45}, \bibinfo{number}{5} (\bibinfo{year}{2007}),
  \bibinfo{pages}{491--530}.
\newblock


\bibitem[\protect\citeauthoryear{Grosof, Harchol-Balter, and
  Scheller-Wolf}{Grosof et~al\mbox{.}}{2020}]%
        {StabilityMultiserver20}
\bibfield{author}{\bibinfo{person}{Isaac Grosof}, \bibinfo{person}{Mor
  Harchol-Balter}, {and} \bibinfo{person}{Alan Scheller-Wolf}.}
  \bibinfo{year}{2020}\natexlab{}.
\newblock \bibinfo{title}{{Stability for Two-class Multiserver-job Systems}}.
\newblock \bibinfo{howpublished}{arXiv:2010.00631}.
\newblock


\bibitem[\protect\citeauthoryear{Hajek}{Hajek}{1982}]%
        {Haj_82}
\bibfield{author}{\bibinfo{person}{Bruce Hajek}.}
  \bibinfo{year}{1982}\natexlab{}.
\newblock \showarticletitle{Hitting-Time and Occupation-Time Bounds Implied by
  Drift Analysis with Applications}.
\newblock \bibinfo{journal}{\emph{Adv. Appl. Probab.}} \bibinfo{volume}{14},
  \bibinfo{number}{3} (\bibinfo{year}{1982}), \bibinfo{pages}{502--525}.
\newblock


\bibitem[\protect\citeauthoryear{Halfin and Whitt}{Halfin and Whitt}{1981}]%
        {HalWhi_81}
\bibfield{author}{\bibinfo{person}{Shlomo Halfin} {and} \bibinfo{person}{Ward
  Whitt}.} \bibinfo{year}{1981}\natexlab{}.
\newblock \showarticletitle{Heavy-Traffic Limits for Queues with Many
  Exponential Servers}.
\newblock \bibinfo{journal}{\emph{Oper. Res.}} \bibinfo{volume}{29},
  \bibinfo{number}{3} (\bibinfo{year}{1981}), \bibinfo{pages}{567--588}.
\newblock


\bibitem[\protect\citeauthoryear{Hunt and Kurtz}{Hunt and Kurtz}{1994}]%
        {HunKur_94}
\bibfield{author}{\bibinfo{person}{P.~J. Hunt} {and} \bibinfo{person}{T.~G.
  Kurtz}.} \bibinfo{year}{1994}\natexlab{}.
\newblock \showarticletitle{Large loss networks}.
\newblock \bibinfo{journal}{\emph{Stoch. Proc. Appl.}} \bibinfo{volume}{53},
  \bibinfo{number}{2} (\bibinfo{year}{1994}), \bibinfo{pages}{363 -- 378}.
\newblock


\bibitem[\protect\citeauthoryear{Hunt and Laws}{Hunt and Laws}{1997}]%
        {HunLaw_97}
\bibfield{author}{\bibinfo{person}{P.~J. Hunt} {and} \bibinfo{person}{C.~N.
  Laws}.} \bibinfo{year}{1997}\natexlab{}.
\newblock \showarticletitle{Optimization via trunk reservation in single
  resource loss systems under heavy traffic}.
\newblock \bibinfo{journal}{\emph{Ann. Appl. Probab.}} \bibinfo{volume}{7},
  \bibinfo{number}{4} (\bibinfo{date}{Nov.} \bibinfo{year}{1997}),
  \bibinfo{pages}{1058--1079}.
\newblock


\bibitem[\protect\citeauthoryear{Iglehart}{Iglehart}{1973}]%
        {Igl_73}
\bibfield{author}{\bibinfo{person}{Donald~L. Iglehart}.}
  \bibinfo{year}{1973}\natexlab{}.
\newblock \showarticletitle{Weak convergence of compound stochastic process,
  {I}}.
\newblock \bibinfo{journal}{\emph{Stoch. Proc. Appl.}} \bibinfo{volume}{1},
  \bibinfo{number}{1} (\bibinfo{year}{1973}), \bibinfo{pages}{11 -- 31}.
\newblock
\showISSN{0304-4149}


\bibitem[\protect\citeauthoryear{Kim}{Kim}{1979}]%
        {Kim79}
\bibfield{author}{\bibinfo{person}{Sung~Shick Kim}.}
  \bibinfo{year}{1979}\natexlab{}.
\newblock \emph{\bibinfo{title}{M/M/s queueing system where customers demand
  multiple server use}}.
\newblock \bibinfo{thesistype}{Ph.D. Dissertation}. \bibinfo{school}{Southern
  Methodist University}.
\newblock


\bibitem[\protect\citeauthoryear{Krzesinski}{Krzesinski}{2011}]%
        {Krz_11}
\bibfield{author}{\bibinfo{person}{A.~E. Krzesinski}.}
  \bibinfo{year}{2011}\natexlab{}.
\newblock \bibinfo{booktitle}{\emph{Order Independent Queues}}.
\newblock \bibinfo{publisher}{Springer US}, \bibinfo{address}{Boston, MA},
  \bibinfo{pages}{85--120}.
\newblock


\bibitem[\protect\citeauthoryear{Kurtz}{Kurtz}{1981}]%
        {Kur_81}
\bibfield{author}{\bibinfo{person}{Thomas~G. Kurtz}.}
  \bibinfo{year}{1981}\natexlab{}.
\newblock \bibinfo{booktitle}{\emph{Approximation of Population Processes}}.
\newblock \bibinfo{publisher}{Society for Industrial and Applied Mathematics}.
\newblock


\bibitem[\protect\citeauthoryear{Lin, Paolieri, Chou, and Golubchik}{Lin
  et~al\mbox{.}}{2018}]%
        {lin2018model}
\bibfield{author}{\bibinfo{person}{Sung-Han Lin}, \bibinfo{person}{Marco
  Paolieri}, \bibinfo{person}{Cheng-Fu Chou}, {and} \bibinfo{person}{Leana
  Golubchik}.} \bibinfo{year}{2018}\natexlab{}.
\newblock \showarticletitle{A model-based approach to streamlining distributed
  training for asynchronous SGD}. In \bibinfo{booktitle}{\emph{IEEE Int. Symp.
  Modeling, Analysis and Simulation of Computer and Telecommunication Systems
  (MASCOTS)}}. \bibinfo{pages}{306--318}.
\newblock


\bibitem[\protect\citeauthoryear{Liu}{Liu}{2019}]%
        {Liu_19}
\bibfield{author}{\bibinfo{person}{Xin Liu}.} \bibinfo{year}{2019}\natexlab{}.
\newblock \emph{\bibinfo{title}{Steady State Analysis of Load Balancing
  Algorithms in the Heavy Traffic Regime}}.
\newblock \bibinfo{thesistype}{Ph.D. Dissertation}. \bibinfo{school}{Arizona
  State University}.
\newblock


\bibitem[\protect\citeauthoryear{Liu, Gong, and Ying}{Liu
  et~al\mbox{.}}{2020}]%
        {LiuGonYin_20}
\bibfield{author}{\bibinfo{person}{Xin Liu}, \bibinfo{person}{Kang Gong}, {and}
  \bibinfo{person}{Lei Ying}.} \bibinfo{year}{2020}\natexlab{}.
\newblock \showarticletitle{Steady-State Analysis of Load Balancing with
  {Coxian-2} Distributed Service Times}.
\newblock \bibinfo{journal}{\emph{arXiv:2005.09815 [math.PR]}}
  (\bibinfo{year}{2020}).
\newblock


\bibitem[\protect\citeauthoryear{Liu and Ying}{Liu and Ying}{2019}]%
        {LiuYin_19}
\bibfield{author}{\bibinfo{person}{Xin Liu} {and} \bibinfo{person}{Lei Ying}.}
  \bibinfo{year}{2019}\natexlab{}.
\newblock \showarticletitle{On Universal Scaling of Distributed Queues under
  Load Balancing}.
\newblock \bibinfo{journal}{\emph{arXiv:1912.11904 [math.PR]}}
  (\bibinfo{year}{2019}).
\newblock


\bibitem[\protect\citeauthoryear{Liu and Ying}{Liu and Ying}{2020}]%
        {LiuYin_20}
\bibfield{author}{\bibinfo{person}{Xin Liu} {and} \bibinfo{person}{Lei Ying}.}
  \bibinfo{year}{2020}\natexlab{}.
\newblock \showarticletitle{Steady-state analysis of load-balancing algorithms
  in the sub-{Halfin–Whitt} regime}.
\newblock \bibinfo{journal}{\emph{J. Appl. Probab.}} \bibinfo{volume}{57},
  \bibinfo{number}{2} (\bibinfo{year}{2020}), \bibinfo{pages}{578–596}.
\newblock


\bibitem[\protect\citeauthoryear{Lu, Xie, Kliot, Geller, Larus, and
  Greenberg}{Lu et~al\mbox{.}}{2011}]%
        {LuXieKli_11}
\bibfield{author}{\bibinfo{person}{Yi Lu}, \bibinfo{person}{Qiaomin Xie},
  \bibinfo{person}{Gabriel Kliot}, \bibinfo{person}{Alan Geller},
  \bibinfo{person}{James~R. Larus}, {and} \bibinfo{person}{Albert Greenberg}.}
  \bibinfo{year}{2011}\natexlab{}.
\newblock \showarticletitle{{Join-Idle-Queue}: A Novel Load Balancing Algorithm
  for Dynamically Scalable Web Services}.
\newblock \bibinfo{journal}{\emph{Perform. Eval.}} \bibinfo{volume}{68},
  \bibinfo{number}{11} (\bibinfo{date}{Nov.} \bibinfo{year}{2011}),
  \bibinfo{pages}{1056--1071}.
\newblock


\bibitem[\protect\citeauthoryear{Maguluri and Srikant}{Maguluri and
  Srikant}{2013}]%
        {MagSri_2013}
\bibfield{author}{\bibinfo{person}{Siva~Theja Maguluri} {and}
  \bibinfo{person}{R. Srikant}.} \bibinfo{year}{2013}\natexlab{}.
\newblock \showarticletitle{Scheduling jobs with unknown duration in clouds}.
  In \bibinfo{booktitle}{\emph{Proc. IEEE Int. Conf. Computer Communications
  (INFOCOM)}}. \bibinfo{pages}{1887--1895}.
\newblock


\bibitem[\protect\citeauthoryear{Maguluri and Srikant}{Maguluri and
  Srikant}{2016}]%
        {MagSri_16}
\bibfield{author}{\bibinfo{person}{Siva~Theja Maguluri} {and}
  \bibinfo{person}{R. Srikant}.} \bibinfo{year}{2016}\natexlab{}.
\newblock \showarticletitle{Heavy traffic queue length behavior in a switch
  under the MaxWeight algorithm}.
\newblock \bibinfo{journal}{\emph{Stoch. Syst.}} \bibinfo{volume}{6},
  \bibinfo{number}{1} (\bibinfo{year}{2016}), \bibinfo{pages}{211--250}.
\newblock


\bibitem[\protect\citeauthoryear{Maguluri, Srikant, and Ying}{Maguluri
  et~al\mbox{.}}{2014}]%
        {MagSriYin_14}
\bibfield{author}{\bibinfo{person}{Siva~Theja Maguluri}, \bibinfo{person}{R.
  Srikant}, {and} \bibinfo{person}{Lei Ying}.} \bibinfo{year}{2014}\natexlab{}.
\newblock \showarticletitle{Heavy traffic optimal resource allocation
  algorithms for cloud computing clusters}.
\newblock \bibinfo{journal}{\emph{Perform. Eval.}}  \bibinfo{volume}{81}
  (\bibinfo{year}{2014}), \bibinfo{pages}{20--39}.
\newblock


\bibitem[\protect\citeauthoryear{Melikov}{Melikov}{1996a}]%
        {Mel_96}
\bibfield{author}{\bibinfo{person}{Agassi Melikov}.}
  \bibinfo{year}{1996}\natexlab{a}.
\newblock \showarticletitle{Computation and Optimization Methods for
  Multiresource Queues}.
\newblock \bibinfo{journal}{\emph{Cybern. Syst. Anal.}} \bibinfo{volume}{32},
  \bibinfo{number}{6} (\bibinfo{year}{1996}), \bibinfo{pages}{821--836}.
\newblock


\bibitem[\protect\citeauthoryear{Melikov}{Melikov}{1996b}]%
        {Melikov96}
\bibfield{author}{\bibinfo{person}{A.~Z. Melikov}.}
  \bibinfo{year}{1996}\natexlab{b}.
\newblock \showarticletitle{Computation and Optimization Methods for
  Multiresource Queues}.
\newblock \bibinfo{journal}{\emph{Cybernetics and Systems Analysis}}
  \bibinfo{volume}{32}, \bibinfo{number}{6} (\bibinfo{year}{1996}).
\newblock


\bibitem[\protect\citeauthoryear{Mitzenmacher}{Mitzenmacher}{1996}]%
        {Mit_96}
\bibfield{author}{\bibinfo{person}{Micheal~David Mitzenmacher}.}
  \bibinfo{year}{1996}\natexlab{}.
\newblock \emph{\bibinfo{title}{The Power of Two Choices in Randomized Load
  Balancing}}.
\newblock \bibinfo{thesistype}{Ph.D. Dissertation}. \bibinfo{school}{University
  of California at Berkeley}.
\newblock


\bibitem[\protect\citeauthoryear{Morozov and Rumyantsev}{Morozov and
  Rumyantsev}{2016}]%
        {MorozovRumyantsev16}
\bibfield{author}{\bibinfo{person}{Evsey Morozov} {and}
  \bibinfo{person}{Alexander~S. Rumyantsev}.} \bibinfo{year}{2016}\natexlab{}.
\newblock \showarticletitle{Stability Analysis of a {MAP/M/s} Cluster Model by
  Matrix-Analytic Method}. In \bibinfo{booktitle}{\emph{European Workshop
  Computer Performance Engineering (EPEW)}}, Vol.~\bibinfo{volume}{9951}.
  \bibinfo{address}{Chios, Greece}, \bibinfo{pages}{63--76}.
\newblock


\bibitem[\protect\citeauthoryear{Mukherjee, Borst, and van
  Leeuwaarden}{Mukherjee et~al\mbox{.}}{2018}]%
        {MukBorvan_18}
\bibfield{author}{\bibinfo{person}{Debankur Mukherjee}, \bibinfo{person}{Sem~C.
  Borst}, {and} \bibinfo{person}{Johan~S.H. van Leeuwaarden}.}
  \bibinfo{year}{2018}\natexlab{}.
\newblock \showarticletitle{Asymptotically Optimal Load Balancing Topologies}.
\newblock \bibinfo{journal}{\emph{Proc. ACM SIGMETRICS Int. Conf. Measurement
  and Modeling of Computer Systems}} \bibinfo{volume}{2}, \bibinfo{number}{1},
  Article \bibinfo{articleno}{14} (\bibinfo{date}{April} \bibinfo{year}{2018}),
  \bibinfo{numpages}{29}~pages.
\newblock


\bibitem[\protect\citeauthoryear{Ponomarenko, Kim, and Melikov}{Ponomarenko
  et~al\mbox{.}}{2010}]%
        {PonKimMel_10}
\bibfield{author}{\bibinfo{person}{Leonid Ponomarenko},
  \bibinfo{person}{Che~Soong Kim}, {and} \bibinfo{person}{Agassi Melikov}.}
  \bibinfo{year}{2010}\natexlab{}.
\newblock \bibinfo{booktitle}{\emph{Performance analysis and optimization of
  multi-traffic on communication networks}}.
\newblock \bibinfo{publisher}{Springer Science \& Business Media}.
\newblock


\bibitem[\protect\citeauthoryear{Psychas and Ghaderi}{Psychas and
  Ghaderi}{2018}]%
        {PsyGha_18}
\bibfield{author}{\bibinfo{person}{Konstantinos Psychas} {and}
  \bibinfo{person}{Javad Ghaderi}.} \bibinfo{year}{2018}\natexlab{}.
\newblock \showarticletitle{On Non-Preemptive {VM} Scheduling in the Cloud}. In
  \bibinfo{booktitle}{\emph{Proc. ACM SIGMETRICS Int. Conf. Measurement and
  Modeling of Computer Systems}}. \bibinfo{address}{Irvine, CA},
  \bibinfo{pages}{67–69}.
\newblock


\bibitem[\protect\citeauthoryear{Psychas and Ghaderi}{Psychas and
  Ghaderi}{2019}]%
        {PsyGha_19}
\bibfield{author}{\bibinfo{person}{Konstantinos Psychas} {and}
  \bibinfo{person}{Javad Ghaderi}.} \bibinfo{year}{2019}\natexlab{}.
\newblock \showarticletitle{Scheduling Jobs with Random Resource Requirements
  in Computing Clusters}. In \bibinfo{booktitle}{\emph{Proc. IEEE Int. Conf.
  Computer Communications (INFOCOM)}}. \bibinfo{pages}{2269--2277}.
\newblock


\bibitem[\protect\citeauthoryear{Rumyantsev and Morozov}{Rumyantsev and
  Morozov}{2017}]%
        {RumyantsevMorozov17}
\bibfield{author}{\bibinfo{person}{Alexander Rumyantsev} {and}
  \bibinfo{person}{Evsey Morozov}.} \bibinfo{year}{2017}\natexlab{}.
\newblock \showarticletitle{Stability criterion of a multiserver model with
  simultaneous service}.
\newblock \bibinfo{journal}{\emph{Annals of Operations Research}}
  \bibinfo{volume}{252}, \bibinfo{number}{1} (\bibinfo{year}{2017}),
  \bibinfo{pages}{29--39}.
\newblock


\bibitem[\protect\citeauthoryear{Rutten and Mukherjee}{Rutten and
  Mukherjee}{2021}]%
        {RutMuk_21}
\bibfield{author}{\bibinfo{person}{Daan Rutten} {and} \bibinfo{person}{Debankur
  Mukherjee}.} \bibinfo{year}{2021}\natexlab{}.
\newblock \showarticletitle{Load balancing under strict compatibility
  constraints}. In \bibinfo{booktitle}{\emph{Proc. ACM SIGMETRICS Int. Conf.
  Measurement and Modeling of Computer Systems}}.
\newblock


\bibitem[\protect\citeauthoryear{Srikant and Ying}{Srikant and Ying}{2014}]%
        {SriYin_14}
\bibfield{author}{\bibinfo{person}{R. Srikant} {and} \bibinfo{person}{Lei
  Ying}.} \bibinfo{year}{2014}\natexlab{}.
\newblock \bibinfo{booktitle}{\emph{Communication Networks: An Optimization,
  Control and Stochastic Networks Perspective}}.
\newblock \bibinfo{publisher}{Cambridge Univ. Press}, \bibinfo{address}{New
  York}.
\newblock


\bibitem[\protect\citeauthoryear{Stolyar}{Stolyar}{2015}]%
        {Sto_15}
\bibfield{author}{\bibinfo{person}{Alexander~L. Stolyar}.}
  \bibinfo{year}{2015}\natexlab{}.
\newblock \showarticletitle{Pull-based load distribution in large-scale
  heterogeneous service systems}.
\newblock \bibinfo{journal}{\emph{Queueing Syst.}} \bibinfo{volume}{80},
  \bibinfo{number}{4} (\bibinfo{date}{Aug.} \bibinfo{year}{2015}),
  \bibinfo{pages}{341--361}.
\newblock


\bibitem[\protect\citeauthoryear{Tikhonenko}{Tikhonenko}{2005}]%
        {Tikhonenko05}
\bibfield{author}{\bibinfo{person}{Oleg~M. Tikhonenko}.}
  \bibinfo{year}{2005}\natexlab{}.
\newblock \showarticletitle{Generalized Erlang Problem for Service Systems with
  Finite Total Capacity}.
\newblock \bibinfo{journal}{\emph{Problems of Information Transmission}}
  \bibinfo{volume}{41}, \bibinfo{number}{3} (\bibinfo{year}{2005}),
  \bibinfo{pages}{243--253}.
\newblock


\bibitem[\protect\citeauthoryear{Tirmazi, Barker, Deng, Haque, Qin, Hand,
  Harchol-Balter, and Wilkes}{Tirmazi et~al\mbox{.}}{2020}]%
        {Eurosys20}
\bibfield{author}{\bibinfo{person}{Muhammad Tirmazi}, \bibinfo{person}{Adam
  Barker}, \bibinfo{person}{Nan Deng}, \bibinfo{person}{Md~E. Haque},
  \bibinfo{person}{Zhijing~Gene Qin}, \bibinfo{person}{Steven Hand},
  \bibinfo{person}{Mor Harchol-Balter}, {and} \bibinfo{person}{John Wilkes}.}
  \bibinfo{year}{2020}\natexlab{}.
\newblock \showarticletitle{Borg: The next Generation}. In
  \bibinfo{booktitle}{\emph{Proc. European Conf. Computer Systems (EuroSys)}}.
  \bibinfo{address}{Heraklion, Greece}, Article \bibinfo{articleno}{30},
  \bibinfo{numpages}{14}~pages.
\newblock


\bibitem[\protect\citeauthoryear{{van}~Dijk}{{van}~Dijk}{1989}]%
        {vanDijk89}
\bibfield{author}{\bibinfo{person}{Nico~M. {van}~Dijk}.}
  \bibinfo{year}{1989}\natexlab{}.
\newblock \showarticletitle{Blocking of Finite Source Inputs Which Require
  Simultaneous Servers with General Think and Holding Times}.
\newblock \bibinfo{journal}{\emph{Operations Research Letters}}
  \bibinfo{volume}{8}, \bibinfo{number}{1} (\bibinfo{date}{February}
  \bibinfo{year}{1989}), \bibinfo{pages}{45 -- 52}.
\newblock


\bibitem[\protect\citeauthoryear{Verma, Pedrosa, Korupolu, Oppenheimer, Tune,
  and Wilkes}{Verma et~al\mbox{.}}{2015}]%
        {verma2015large}
\bibfield{author}{\bibinfo{person}{Abhishek Verma}, \bibinfo{person}{Luis
  Pedrosa}, \bibinfo{person}{Madhukar Korupolu}, \bibinfo{person}{David
  Oppenheimer}, \bibinfo{person}{Eric Tune}, {and} \bibinfo{person}{John
  Wilkes}.} \bibinfo{year}{2015}\natexlab{}.
\newblock \showarticletitle{Large-scale cluster management at Google with
  Borg}. In \bibinfo{booktitle}{\emph{Proc. European Conf. Computer Systems
  (EuroSys)}}. ACM, \bibinfo{pages}{18}.
\newblock


\bibitem[\protect\citeauthoryear{Vvedenskaya, Dobrushin, and
  Karpelevich}{Vvedenskaya et~al\mbox{.}}{1996}]%
        {VveDobKar_96}
\bibfield{author}{\bibinfo{person}{N.~D. Vvedenskaya}, \bibinfo{person}{R.~L.
  Dobrushin}, {and} \bibinfo{person}{F.~I. Karpelevich}.}
  \bibinfo{year}{1996}\natexlab{}.
\newblock \showarticletitle{Queueing System with Selection of the Shortest of
  Two Queues: An Asymptotic Approach}.
\newblock \bibinfo{journal}{\emph{Probl. Inf. Transm.}} \bibinfo{volume}{32},
  \bibinfo{number}{1} (\bibinfo{year}{1996}), \bibinfo{pages}{15--27}.
\newblock


\bibitem[\protect\citeauthoryear{Wang, Maguluri, Srikant, and Ying}{Wang
  et~al\mbox{.}}{2018}]%
        {WanMagSri_18}
\bibfield{author}{\bibinfo{person}{Weina Wang}, \bibinfo{person}{Siva~Theja
  Maguluri}, \bibinfo{person}{R. Srikant}, {and} \bibinfo{person}{Lei Ying}.}
  \bibinfo{year}{2018}\natexlab{}.
\newblock \showarticletitle{Heavy-Traffic Delay Insensitivity in
  Connection-Level Models of Data Transfer with Proportionally Fair Bandwidth
  Sharing}.
\newblock \bibinfo{journal}{\emph{ACM SIGMETRICS Perform. Evaluation Rev.}}
  \bibinfo{volume}{45}, \bibinfo{number}{3} (\bibinfo{date}{March}
  \bibinfo{year}{2018}), \bibinfo{pages}{232–245}.
\newblock
\showISSN{0163-5999}


\bibitem[\protect\citeauthoryear{Wang, Zhu, Ying, Tan, and Zhang}{Wang
  et~al\mbox{.}}{2013}]%
        {WanZhuYin_13}
\bibfield{author}{\bibinfo{person}{Weina Wang}, \bibinfo{person}{Kai Zhu},
  \bibinfo{person}{Lei Ying}, \bibinfo{person}{Jian Tan}, {and}
  \bibinfo{person}{Li Zhang}.} \bibinfo{year}{2013}\natexlab{}.
\newblock \showarticletitle{A throughput optimal algorithm for map task
  scheduling in {MapReduce} with data locality}.
\newblock \bibinfo{journal}{\emph{ACM SIGMETRICS Perform. Evaluation Rev.}}
  \bibinfo{volume}{40}, \bibinfo{number}{4} (\bibinfo{date}{March}
  \bibinfo{year}{2013}), \bibinfo{pages}{33--42}.
\newblock


\bibitem[\protect\citeauthoryear{Weng and Wang}{Weng and Wang}{2021}]%
        {WenWan_21}
\bibfield{author}{\bibinfo{person}{Wentao Weng} {and} \bibinfo{person}{Weina
  Wang}.} \bibinfo{year}{2021}\natexlab{}.
\newblock \showarticletitle{Achieving Zero Asymptotic Queueing Delay for
  Parallel Jobs}. In \bibinfo{booktitle}{\emph{Proc. ACM SIGMETRICS Int. Conf.
  Measurement and Modeling of Computer Systems}}.
\newblock


\bibitem[\protect\citeauthoryear{Weng, Zhou, and Srikant}{Weng
  et~al\mbox{.}}{2021}]%
        {WenZhoSri_21}
\bibfield{author}{\bibinfo{person}{Wentao Weng}, \bibinfo{person}{Xinyu Zhou},
  {and} \bibinfo{person}{R. Srikant}.} \bibinfo{year}{2021}\natexlab{}.
\newblock \showarticletitle{Optimal Load Balancing with Locality Constraints}.
  In \bibinfo{booktitle}{\emph{Proc. ACM SIGMETRICS Int. Conf. Measurement and
  Modeling of Computer Systems}}.
\newblock


\bibitem[\protect\citeauthoryear{Whitt}{Whitt}{1985}]%
        {Whi_85}
\bibfield{author}{\bibinfo{person}{Ward Whitt}.}
  \bibinfo{year}{1985}\natexlab{}.
\newblock \showarticletitle{Blocking when service is required from several
  facilities simultaneously}.
\newblock \bibinfo{journal}{\emph{AT\&T Tech. J.}}  \bibinfo{volume}{64}
  (\bibinfo{year}{1985}), \bibinfo{pages}{1807 -- 1856}.
\newblock


\bibitem[\protect\citeauthoryear{Wilkes}{Wilkes}{2019}]%
        {Wilkes19}
\bibfield{author}{\bibinfo{person}{John Wilkes}.}
  \bibinfo{year}{2019}\natexlab{}.
\newblock \bibinfo{title}{{Google} cluster-usage traces v3}.
\newblock
\newblock
\newblock
\shownote{http://github.com/google/cluster-data.}


\bibitem[\protect\citeauthoryear{Xie, Dong, Lu, and Srikant}{Xie
  et~al\mbox{.}}{2015}]%
        {XieDonLu_15}
\bibfield{author}{\bibinfo{person}{Qiaomin Xie}, \bibinfo{person}{Xiaobo Dong},
  \bibinfo{person}{Yi Lu}, {and} \bibinfo{person}{R. Srikant}.}
  \bibinfo{year}{2015}\natexlab{}.
\newblock \showarticletitle{Power of d Choices for Large-Scale Bin Packing: A
  Loss Model}. In \bibinfo{booktitle}{\emph{Proc. ACM SIGMETRICS Int. Conf.
  Measurement and Modeling of Computer Systems}}. \bibinfo{address}{Portland,
  OR}, \bibinfo{pages}{321--334}.
\newblock


\bibitem[\protect\citeauthoryear{Xie and Lu}{Xie and Lu}{2015}]%
        {XieLu_15}
\bibfield{author}{\bibinfo{person}{Qiaomin Xie} {and} \bibinfo{person}{Yi Lu}.}
  \bibinfo{year}{2015}\natexlab{}.
\newblock \showarticletitle{Priority algorithm for near-data scheduling:
  Throughput and heavy-traffic optimality}. In \bibinfo{booktitle}{\emph{Proc.
  IEEE Int. Conf. Computer Communications (INFOCOM)}}. \bibinfo{address}{Hong
  Kong, China}, \bibinfo{pages}{963--972}.
\newblock


\end{thebibliography}
\end{document}